\documentclass[12pt]{article} 
\usepackage{amsfonts, amsmath, amssymb, amsthm, xcolor} 
\usepackage{tikz-cd}
\usepackage[utf8]{inputenc}

\swapnumbers
\newtheorem{theorem}{Theorem}[section] 
\newtheorem{definition}[theorem]{Definition} 
\newtheorem{lemma}[theorem]{Lemma} 
 
\newtheorem{remark}[theorem]{Remark} 
\newtheorem{example}[theorem]{Example} 
\newtheorem{proposition}[theorem]{Proposition}

\newcommand{\1}{{\mathsf 1}}
\newcommand{\A}{{\mathcal A}}
\newcommand{\B}{{\mathcal B}}
\newcommand{\blue}{\color{blue}}
\newcommand{\Bor}{\mbox{\sf Bor}}
\newcommand{\Cat}{{\mathcal C}}

\newcommand{\E}{{\mathcal E}}
\newcommand{\Ev}{{\mathcal E}}

\newcommand{\Ext}{\mbox{\sf Ext}}
\renewcommand{\H}{\mbox{\bf H}}
\renewcommand{\hat}{\widehat}
\newcommand{\Hilb}{\mbox{\bf Hilb}}

\newcommand{\K}{\mbox{\bf K}}
\newcommand{\M}{{\mathcal M}}
\newcommand{\Markov}{\mbox{\bf Markov}}
\newcommand{\mc}[1]{{\mathcal #1}}

\newcommand{\N}{{\mathbb N}}

\renewcommand{\P}{{\mathbb P}}

\renewcommand{\Pr}{{\mathcal Pr}}
\newcommand{\Prob}{\mbox{\bf Prob}}
\newcommand{\R}{{\mathbb R}}
\renewcommand{\S}{S}
\newcommand{\Set}{\mbox{\bf Set}}

\newcommand{\tempout}[1]{{}}
\newcommand{\Tesp}{\mbox{\bf Tesp}}
\newcommand{\Tr}{\mbox{Tr}}
\renewcommand{\tilde}{\widetilde}
\newcommand{\ts}[1]{{\sf #1}}
\newcommand{\V}{{\mathbb V}}

\title{\sf Classical Explanations in (and of) General Probabilistic Theories}
\author{\sf John Harding\footnote{Department of Mathematics and Statistics, New Mexico State University} \footnote{JH acknowledges the support of NSF grant DMS-2231414.} and Alex Wilce\footnote{Department of Mathematics and Computer Science, Susquehanna University}}
\begin{document}

\maketitle

\begin{abstract} 
A probabilistic model consists of a test space $\M$ and a designated set of probability weights on $\M$, called the state space of the model. These structures form a category, $\Prob$, in a natural way. A probabilistic theory is then a functor from a category of (say, physical) systems and processes 
(or mathematical proxies for these) into $\Prob$.  We introduce a notion of the explanation of one model by another as a particular kind of span in $\Prob$. More precisely, an explanation of a model $A$ by a model $B$ consists of a model $C$, a quotient morphism  from $C$ to $A$, and an embedding of $B$ into $C$ (all of which terms we define).  We show that co-explanations have pullbacks, which are explanations, allowing us to compose explanations.  We then show that, subject to one very minor restriction --- that M contain no more than one single-outcome test --- every locally-finite model, and in particular, every quantum model, has a canonical classical explanation. The construction is functorial, so every locally-finite, point-closed  probabilistic theory has a canonical sharp classical extension. We explain the connections between our framework and that of ontological representations, and also consider the familiar tension between classicality and locality in our terms. Unsurprisingly, our canonical classical explanations are highly non-local. But because we do not assume finite-dimensionality, local tomography, or that all probability weights are states, some subtleties arise. In particular, there is a distinction between entanglement of states, which can occur even classically, and of probability weights, which can't. 
\end{abstract} 


\section{Introduction}

Classical probability theory contains the tacit assumption that all experiments have common refinements, and can, therefore, effectively be performed jointly. QM suggests that this is not a reasonable assumption. Attempts to generalize probability theory in order to accommodate incompatible experiments have proceeded along two main lines: the ``quantum-logic" approach, in which Boolean algebras are replaced by more general orthocomplemented posets (typically, orthomodular lattices or posets), and the ``convex" approach, in which the simplex of probability measures on a classical measurable space is replaced by a more-or-less arbitrary convex set (typically compact, or at least radially bounded) with states represented by so-called effects.  

The literature is replete with theorems showing that models of either of these types can be represented --- in one way or another --- in terms of a classical probabilistic model, provided we are willing to accept (i) failures of contextuality and (ii) loss of locality. However, these results are obtained within several different formal frameworks, and the representations they offer are not all equivalent.  Our purpose in this paper is to provide a unified, and somewhat more general, picture; one that we believe is ultimately clarifying.  

We begin by reviewing a framework for generalized probability theory that includes both the quantum logic and convex approaches as special cases. Within this, we construct a category $\Prob$ of concrete probabilistic models, following \cite{FR-Dirac, CCAM, Short-Course}. We then propose a precise definition of an {\em explanation} of one probabilistic model by another (briefly: a particular kind of span in $\Prob$). Reworking a construction first discussed in \cite{SEP}, we show that there exists an endofunctor $\Bor : \Prob \rightarrow \Prob$ sending each probabilistic model $A$ to a classical {\em Borel model} $\Bor(A)$, in terms of which $A$ has an explanation. 

We explain how this generalizes the classical embeddings and extensions of Beltrametti and Bugajski \cite{BelBug} as well as the framework of ontological models  \cite{Leifer}. We also note that any {\em probabilistic theory} --- a functor $M : \Cat \rightarrow \Prob$, where $\Cat$ is a category representing physical systems and processes \cite{Shortcut, CCAM, Short-Course} --- has a classical representation, namely, $\Bor \circ M$.  

Finally, we explore the ways in which this canonical ``classicalization" interacts with composite systems. The results are unsurprising: if $AB$ is a nonsignalling composite of $A$ and $B$, then if $AB$ admits a classical explanation in our sense, then all of its states are separable as joint probability weights, but not necessarily as joint states, of $A$ and $B$ --- a distinction we explain in Section 4.  It follows that entangled probability weights on $AB$ have no classical explanation, even in our quite generous sense. We also consider how Bell's Theorem can be formulated in this framework.  What novelty there is here consists in the level of generality (e.g., we do not assume local tomography or finite-dimensionality, and allow for a broader notion of classical explanation than is usually dealt with in the literature), and in the detailed development of the mathematics supporting this generality. In Section 5, we conclude with some quasi-philosophical reflections on the different attitudes one can reasonably take about the significance of this result for our understanding of the world.

Somewhat similar ideas are explored in an earlier paper by Gheorgiu and Heunen \cite{Gheorgiu-Heunen} in the more restricted context of ontological models of quantum theory. Specifically, Gheorgiu and Heunen consider an ontological model as a functor from $\Hilb$ to the category $\Markov$ of measurable spaces and Markov processes. 
However, \cite{Gheorgiu-Heunen} are more concerned with reproducing existing no-go theorems; our focus here is on positive results.\\


\section{Probabilistic models and theories} 

In discrete classical probability theory, a {\em probabilistic model} is usually understood to consist of the outcome-set $E$ of some experiment, together with a single probability weight thereon.  There are two obvious ways in which to generalize: one is to allow for multiple experiments (and thus, multiple outcome-sets); the other is to allow for multiple probability weights. One motivation for the former is to better understand theories like QM in which distinct statistical experiments can be incompatible. The motivation for the latter is simply to allow for some uncertainty as to the true probability weight, e.g., in situations in which many different probability weights can be prepared (either deliberately or ``by Nature".) 

These ideas motivate the following notions (see, e.g., \cite{FR-EL, FR-Dirac, Tesp}): A {\em test space} 
or {\em manual}\footnote{``Manual" is the older term; it motivates our notation, and survives in the names of some of our examples.} is a collection $\M$ of non-empty sets, understood as outcome-sets for some catalogue of possible experiments. We refer to $X = \bigcup \M$ as the {\em outcome-space} for $\M$. A {\em probability weight} on $\M$ is a function $\omega : X \rightarrow [0,1]$ summing to one on each test. We write $\Pr(\M)$ for the set of all probability weights on $\M$. Note that $\Pr(\M) \leq [0,1]^{X}$ is convex. If $\M$ is {\em locally finite}, meaning that all tests are finite, then $\Pr(\M)$ is also pointwise closed, and hence, compact. 

\begin{definition}{\em  A {\em probabilistic model} --- 
PM , for short --- is simply a pair $(\M,\Omega)$ where $\M$ is a test space and $\Omega$ is a set of probability weights on $\M$, referred to as the {\em state space} of the model.} \end{definition} 

A very simple example consists of single test $E$ and a single probability weight $\mu$ thereon. For another, consider $\M = \{\{a,b\}, \{u,v\}\}$ with $\Omega = \Pr(\M)$: it is easy to see that a probability weight on $\M$ is determined uniquely by its values on $a$ and $u$, so we have an affine isomorphism $\Pr(\M) \rightarrow [0,1]^2$. Any subset $\Omega$ of $[0,1]^2$ can serve as the state space for a model based on $\M$. We will consider further examples immediately below. At this stage, we impose no  conditions on either $\M$ or $\Omega$.

\subsection{Classical and Quantum Models} 

We now outline how standard measure-theoretic probability theory 
and its non-commutative analogue based on von Neumann algebras both 
fit into the current framework. 

\begin{example}[\bf Borel models] \label{ex: sharp Borel}
{\em If $\S$ is a topological space, the corresponding {\em Borel test space} is the collection $\B(\S)$ of finite or partitions of $\S$ by Borel sets. Thus, outcomes of $\B(\S)$ are non-empty Borel sets, and events are finite pairwise disjoint collections of Borel sets. Two events are perspective iff they have the same union; using this, it easily follows that probability weights on $\B(\S)$ are essentially the same things as finitely-additive Borel probability measures on $\S$ \footnote{Indeed, a finitely-additive Borel probability measure {\em is} simply a probability weight on $\B(\S)$, trivially extended to the entire Borel field by setting $\alpha(\emptyset) = 0$.}. Accordingly, by a {\em Borel model}, we mean any model of the form $(\B(\S),\Omega)$ where $\Omega$ is any family of Borel probability measures on $\S$.
}\end{example} 

A natural modification of this example is to consider the 
set $\B_{\sigma}(\S)$ of countable partitions of $\S$ by 
Borel sets.  Probability weights on $\B_{\sigma}(\S)$ are 
then countably-additive Borel probability measures on $\S$. 
One can also attach probabilistic models to arbitrary measurable spaces, or indeed, arbitrary Boolean algebras, in the same way. 

\begin{example}[\bf Projective quantum models] \label{ex: projective QM} 
{\em If $\H$ is a Hilbert space, let $\P(\H)$ be its lattice of projection operators. A {\em projective test} on $\H$ is simply a finite set of non-zero pairwise orthogonal projections with sum (equivalently, join) equal to $\1$. We write $\M(\H)$ for the collection of such projective tests. Thus, outcomes of $\M(\H)$ are non-zero projections, events are finite or countable pairwise disjoint families of such projections, and two events are perspective iff they have the same sum. It follows easily that probability weights on $\M(\H)$ are essentially the same things as finitely additive probability measures on $\P(\H)$. It is a corollary of Gleason's Theorem that such a measure extends uniquely to a state on the algebra $\B(\H)$ of bounded linear operators on $\H$. In particular, every density operator $W$ on $\H$ yields a probability weight on $\M(\H)$, namely $\alpha_{W}(p) = \Tr(Wp)$. By a {\em Projective Hilbert model}, we mean a PM  of the form $(\M(\H), \Omega(\H))$ where $\Omega(\H)$ is some set of probability weights arising from density operators on $\H$ in this way. 

As with Borel manuals, we can enlarge $\M(\H)$ to the set 
$\M_{\sigma}(\H)$ of countable partitions of unity in $\P(\H)$. More generally, if $\mathfrak A$ is a von Neumann algebra, then replacing $\P(\H)$ with $\P({\mathfrak A})$, we can define test spaces $\M({\mathfrak A})$ and $\M_{\sigma}({\mathfrak A})$. The generalized Gleason Theorem \cite{Bunce-Wright} again 
identifies probability weights on the former with states, 
and those on the latter with $\sigma$-additive states, 
on $\mathfrak  A$. }
\end{example} 

\label{para: unsharp models} {\bf Unsharp probabilistic models}
While our focus in this paper is on probabilistic models in which outcomes are ``sharp", meaning that we can definitively say whether or not a given outcome has occurred, there is a substantial literature on ``unsharp" measurements. A few remarks are in order here about how these fit into the present scheme.  Let $\S$ be a measurable space, and let $B(\S,\Sigma)$ denote the space of  bounded measurable real-valued functions on $\S$.  A {\em response function} on $\S$ is simply a positive 
measurable function  $f \in B(S,\Sigma)$ with $0 \leq f(s) \leq 1$ for every $s \in S$.  A finite {\em partition of unity}  in $B(S,\Sigma)$ is a list $(f_1,...,f_n)$ of response functions with $\sum_{i=1}^{n} f_i = 1$.  If 
$\mu$ is a probability measure on $(S,\Sigma)$, then 
$i \mapsto \int f_i d\mu$ defines a probability weight 
on the indexing set $\{1,...,n\}$, which one understands 
to be the probability of obtaining the $i$-th outcome in 
an ``unsharp" experiment with outcomes labelled by $\{1,...,n\}$. Notice, however, that we can have $f_i = f_j$ for distinct $i, j$. 

\begin{definition}[\bf Unsharp Borel manuals] \label{def: unsharp Borel} The {\em unsharp} Borel manual (or test space) associated 
with $(S,\Sigma)$ is the collection $\M^{o}(S,\Sigma)$ of all {\em graphs} of ordered partitions of unity in $B(S,\Sigma)$. 
\end{definition} 

Thus, outcomes in $\M^{o}(S,\Sigma)$ are pairs $(i,f)$ where 
$i \in \N$ and $f$ is a response function. If $\mu$ is a probability measure on $S$, then $\mu^{o}(i,f) = \int_{\S} f d\mu$ defines a probability weight on $\M^{o}$ We obtain in this way a probabilistic model $M^{o}(\S,\Sigma) = (\M^{o}(\S,\Sigma), \Delta^{o})$, where $\Delta^{o} = \{ \mu^{o} | \mu \in \Delta\}$.  We call this the {\em unsharp Borel model} associated with $(S,\Sigma)$.  

A similar approach allows us to construct ``unsharp" quantum models, and, more generally,  unsharp probabilistic models associated with arbitrary order unit spaces. We can  also 
apply a similar construction above to ordinary  probabilistic models, as follows. suppose $\M$ is any test space. We define the corresponding {\em ordered test space} $\M^{o}$ to consist of the graphs of all bijections $f : \{1,...,|E|\} \rightarrow E$ for every 
$E \in \M$. Outcomes of $\M^{o}$ are ordered pairs 
$(i,x)$ where $x \in X$ and $i \leq |E|$ for some test $E$ containing $x$. Any probability weight $\alpha$ on
$\M$ extends to one on $\M^{o}$ by setting 
$\alpha^{o}(i,x) = \alpha(x)$. Thus, given a PM  
$A = (\M,\Omega)$, we obtain a corresponding {\em ordered model} $A^{o} = (\M^{o}, \Omega^{o})$ where 
$\Omega^{o}$ consists of weights $\alpha^{o}$ with $\alpha \in \Omega$.  This construction will be useful later on.
For further details, see \cite{Unsharp}. 


\subsection{Categories of Probabilistic Models}

If we wish to develop a general probability theory on basis of probabilistic models, we need to identify suitable maps between them.  These should behave reasonably both with respect to the models' test-space structure, and with respect to their states.  

{\bf Test space morphisms} Given the generality and comparatively loose structure of test spaces, it should not be surprising that there is more than one plausible way to define a ``morphism" of such models \cite{FR-MMQM, Tesp}. The following is adequate for our purposes\footnote{What we are calling {\em morphisms} of test spaces, Foulis and Randall called {\em conditionings}, using the term morphism for a much broader class of mappings.}

\begin{definition} {\em Let $\M$ and $\M'$ be test spaces with outcome-sets $X$ and $X'$, respectively.  A {\em test space morphism} $\phi : \M \rightarrow \M'$ is an 
event-valued mapping $\phi : X \rightarrow \Ev(\M')$ 
such that 
\vspace{-.2in}
\begin{itemize} 
\item[(i)] $a \in \Ev(\M) \Rightarrow \phi(a) := \bigcup_{x \in a} \phi(x) \in \Ev(\M')$;  
\item[(ii)] $x \perp y$ in $X$ implies $\phi(x) \perp \phi(y)$; 
\item[(iii)] if $a, b$ are events of $\M$ with $a \sim b$, 
then $\phi(a) \sim \phi(b)$. Equivalently, 
$\phi(E) \sim \phi(F)$ for all tests $E, F \in \M$. 
\end{itemize} 
In the case in which $\phi(x) = \{y\}$ for an outcome $y \in Y$, 
we will ordinarily write $\phi(x) = y$.  }
\end{definition}

We say that a morphism is {\em test-preserving} iff $\phi(E)$ is a test in $\M'$ for every test $E \in \M$,  {\em positive} iff $\phi(x) \not = \emptyset$ for every $x \in X$, and {\em outcome-preserving} when $\phi(x)$ is a singleton for every $x \in X$. 
When $\phi$ is both positive and outcome-preserving, we naturally regard it as a mapping $X \rightarrow X'$. 
An {\em interpretation} is a  test-preserving morphism. 

For a simple example, if $\M \subseteq \M'$, then 
inclusion mapping from $X = \bigcup \M$ to $X' = \bigcup \M'$ is an interpretation. For another, consider a subset $S \subseteq X = \bigcup \M$.  We say that $S$ is a {\em support} iff the set $\M_{S} := \{E \cap S | E \in \M\}$ is irredundant, and thus a test space in its own right. In this case, the mapping $\phi : X \rightarrow \Ev(\M)$ given by $\phi(x) = \{x\}$ if $x \in S$ and 
$\phi(x) = \emptyset$ if $x \not \in S$ is non-positive interpretation morphism.  On the other hand, the inclusion mapping $x \mapsto \{x\}$ from $S$ to $X$ is generally not a test space morphism at all, as simple examples show that condition (iii) can fail.  

Turning to more specific cases, we can characterize 
morphisms between Borel and Quantum models as follows. 


\begin{example}[\bf Morphisms of Borel manuals]{\em Let $(S,\Sigma)$ and $(T,\Xi)$ be measurable 
spaces, and let $\M(S,\Sigma)$ and $\M(T,\Xi)$ be the 
corresponding manuals of countable measurable partitions of 
$S$ and $T$, respectively. 
If $f : S \rightarrow T$ is a measurable mapping, then we have $\phi := f^{-1} \Xi \rightarrow \Sigma$, a Boolean homomorphism. Let $J$ be the kernel of $\phi$, i.e., the ideal 
\[J = \{ b \in \Xi | f^{-1}(b) = \emptyset\} = 
\{ b \in \Xi | b \cap f(S) = \emptyset\}.\]
Then $\Xi \setminus J$ is a support of 
$\M(T,\Xi)$\footnote{since if $E, F$ are two partitions of $S$ by members of $\Xi$, and $E \setminus J \subsetneq F \setminus J$, then 
\[\bigcup_{a \in E \setminus J} f^{-1}(a) \subsetneq 
\bigcup_{b \in F \setminus J} f^{-1}(b).\]
The first union is all of $S$, and so must be the second, 
a contradiction. }
We have, for $a, b \in \Xi \setminus J$, that 
$f^{-1}(a), f^{-1}(b)$ are non-empty, thus elements 
of $\Sigma \setminus \{\emptyset\}$, and 
$f^{-1}(a) \cap f^{-1}(b) = \emptyset$. In other words, 
$\phi$ preserves orthogonality. If $\{a_i\}$ and $\{b_j\}$ 
are events of $A(T,\Xi)$ --- that is, pairwise disjoint 
families of non-empty sets in $\Xi$ --- then 
$\{a_i\} \sim \{b_i\}$ iff $\bigcup a_i = \bigcup b_j$ in $\Xi$, so 
$\bigcup f^{-1}(\{a_i\}) = f^{-1}(\bigcup a_i) = f^{-1}(\bigcup b_j) = \bigcup f^{-1}(\{b_j\})$, so $f^{-1}(\{a_i\}) \sim f^{-1}(\{b_j\})$. Thus, $f^{-1}$ defines a test-space morphism 
$\M(T,\Xi) \rightarrow \M(S,\Sigma)$. Note that this is 
test preserving, hence an interpretation, but is positive iff $f$ is surjective.  }
\end{example} 



\begin{example}[\bf Morphisms of projection manuals]{\em Let $\mathfrak A$ and $\mathfrak B$ be 
von Neumann algebras, with corresponding projection 
manuals $\M({\mathfrak A})$ and $\M({\mathfrak A})$ as 
in \ref{ex: projective QM}.  It is not hard to show that 
any outcome-preserving morphism $\phi : \M({\mathfrak A}) 
\rightarrow \M({\mathfrak B})$, regarded as a mapping 
$\phi : \P({\mathfrak A}) \rightarrow \P({\mathfrak B})$ (setting $\phi(0) = 0$), is {\em ortho-additive}, 
in that $\phi(a + b) = \phi(a) + \phi(b)$ whenever $a$ and $b$ are orthogonal projections; conversely, any ortho-additive mapping $\P({\mathfrak A}) \rightarrow \P({\mathfrak B})$ defines an outcome-preserving morphism 
$\M({\mathfrak A}) 
\rightarrow \M({\mathfrak B})$. In particular, 
any isomorphism or anti-isomorphism $\phi : {\mathfrak A} \rightarrow {\mathfrak B}$  defines a morphism of the 
corresponding projection manuals. As a 
special case, if ${\mathfrak A} = {\mathcal B}(\H)$ and 
${\mathfrak B} = {\mathcal B}(\K)$ for Hilbert spaces 
$\H$ and $\K$, then any isometry or anti-isometry $W : \H \rightarrow \K$ defines a morphism $\phi(a) = W a W^{\ast}$ yields such a morphism.  In \cite{Wright}, R. Wright, generalizing Dye's Theorem, showed that if 
$\mathfrak A$ and $\mathfrak B$ are von Neumann algebras 
acting on Hilbert spaces $\H_1$ and $\H_2$, and 
$\mathfrak A$ has no type $I_{2}$ summand, then any completely ortho-additive, unit-preserving map $\phi : \P(\A) \rightarrow \P(\B)$ must have the structure 
\[\phi(P) = W_{1}(P \otimes \1_{1})W_{1}^{\ast} + W_{2}(P \otimes \1_{2}) W_{2}^{\ast}\]
where $\K = \K_1 \oplus \K_2$ and 
where $W_i : \H \otimes \H_i \rightarrow \K_i$ are 
an isometry and an anti-isometry. }
\end{example}

If $\phi : \M \rightarrow \M'$ and $\psi : \M' \rightarrow \M''$ are morphisms, then 
\[(\psi \circ \phi)(x)  \ := \ \bigcup_{y \in \phi(x)} \psi(y)\]
is again a morphism. It's easily checked that this rule of composition is associative. The mapping $x \mapsto \{x\}$ provides an identity morphism from $\M$ to itself. Thus, we have a category, $\Tesp$, of test spaces and morphisms. Note that every morphism $\phi : X = \bigcup \M  \rightarrow \Ev(\M')$ lifts to a mapping $\tilde{\phi} : \Ev(\M) \rightarrow \Ev(\M')$, given by $\tilde{\phi}(a) = \bigcup_{x \in a} \phi(x)$, and then $\tilde{\psi \circ \phi} = \tilde{\psi} \circ \tilde{\phi}$; thus, we can regard $\Ev$ as a functor from $\Tesp$ to $\Set$.

{\label{para: PM morphisms} {\bf PM  Morphisms} It is straightforward that if $\phi$ is a test space morphism and $\beta$ is a probability weight on $\M'$, then $\phi^{\ast}(\beta) := \beta \circ \phi$ is a sub-normalized positive weight on $\M$. This gives us an affine mapping $\beta \mapsto \phi^{\ast}(\beta)$ from $\Pr(\M')$ to $\Pr(\M)$. Moreover, if $\M'$ is locally finite, this mapping is continuous with respect to the relative product topologies on $\Pr(\M)$ and $\Pr(\M')$.

\begin{definition} {\em If $A$ and $B$ are probabilistic models, a {\em PM -morphism} $\phi: A \rightarrow B$ is a 
test-space morphism $\phi : \M(A) \rightarrow \M(B)$ such 
that for every $\beta \in \Omega(B)$, $\phi^{\ast}(\beta)$ is a {\em possibly sub-normalized} state of $A$. }
\end{definition} 

This last condition tells us that $\phi^{\ast}$ defines an affine mapping $\Omega(B)) \subseteq \V(A)$ with $u_{A}(\phi^{\ast}(\beta)) \leq 1$ for every $\beta \in \Omega(B)$. This extends uniquely to a positive linear mapping $\phi^{\ast} : \V(B) \rightarrow \V(A)$ (it is generally harmless here to use the same symbol for this extension). This dualizes again to give a positive linear mapping $\E(\phi) : \E(A) \rightarrow \E(B)$ with $\E(\phi)(u_A) \leq u_B$. 

We shall say that a model $A$ is {\em full} iff 
$\Omega(A) = \Pr(\M(A))$. If so, then any test-space morphism $\phi : A \rightarrow B$ defines a PM -morphism.    In particular, for the full models associated with measurable spaces, or those associated with Hilbert spaces of dimension $> 2$, all test-space morphisms are PM -morphisms. However, the quantum model associated with ${\mathbb C}^2$ is not full. 

It's clear that PM -morphism compose to give PM -morphisms,  so we have a category, $\Prob$. 

\subsection{Embeddings, Quotients, and Explanations} 

We can now explain what we mean by an {\em explanation} of one model by another. We begin with two cases in which this usage seems relatively obvious.   

Let $\A$ and $\B$ be test spaces with outcome-spaces $X$ and $Y$, respectively. {An interpretation $\phi : \ \rightarrow \B$ is a {\em test-space embedding} iff $\phi$ is positive and injective on outcomes, and a {\em quotient mapping} of test spaces iff it is surjective on tests. A quotient mapping is automatically surjective on outcomes,  
and thus the adjoint mapping $\phi^{\ast} : \Pr(\B) \rightarrow \Pr(\A)$ is injective.}

\begin{definition}\label{def: PM embeddings and quotients} {\em  
A PM -morphism $\phi : A \rightarrow B$ is 
\begin{itemize} 
\item[(a)] a $PM$-{\em embedding} if it is a test space embedding of $\mc{M}(A)$ into $\mc{M}(B)$ and $\phi^{\ast}$ is surjective, that is, $\phi^{\ast}(\Omega(B)) = \Omega(A)$; 
\item[(b)] a $PM$ {\em quotient morphism} iff $q$ is a quotient mapping of test spaces with $q^{\ast}(\Omega(B)) \subseteq \Omega(A)$. 
\end{itemize} }
\end{definition} 

{\begin{example} If $(S,\Sigma_{S})$ and $(T,\Sigma_{T})$ are 
measurable spaces and $f : S \rightarrow T$ is a measurable mapping 
then $f^{-1} : \B(T) \rightarrow \B(S)$ is a quotient mapping 
if $f$ is injective and an embedding if $f$ is surjective [CHECK]
\end{example} }

If $\phi : A \rightarrow B$ is an embedding, there is an 
obvious sense in which $B$ ``explains" $A$, while if $\phi$ is a quotient mapping, then there is a sense in which $A$ explains $B$. Indeed, if $\phi : A \rightarrow B$ is an embedding, then $B$ explains $A$: we obtain $A$ from $B$ by restricting access to certain tests, and identifying states that can no longer be distinguished by this restricted set.   If $\phi$ is a quotient mapping, then $A$ explains $B$: we obtain $B$ from $A$ by neglecting to distinguish between certain outcomes of $A$, and restrict the set of states to (some of) those that do not distinguish between outcomes we've identified.

It is natural to want the relation ``explains" to be transitive. Thus, 
if $A$ is a quotient of $B$ and $B$ is embedded in $C$, we should like to say that $C$ (or rather, the quotient map and the embedding) provide an explanation of $A$. In fact, we 
will adopt this as our definition.

\begin{definition}[\bf Explanations] {\em An {\em explanation} of a model 
$A$ by a model $B$ is a triple $(C,\phi,\psi)$ where 
$C$ is model, $\phi : C \rightarrow A$ is a quotient 
mapping, and $\psi : C \rightarrow B$ is an embedding: 
\[
\begin{tikzcd} 
C \arrow[dd, "q"] \arrow[rr,"\phi"] & & B\\
& & \\
A & & 
\end{tikzcd} 
\]
}
\end{definition}

As we will see, this notion of explanation is indeed transitive. First, we address the natural question of why we do not consider the reverse (dual) construction, in which $\psi$ is the $\phi : A \rightarrow C$ is an embedding and $\psi : B \rightarrow C$ is a quotient mapping. The reason is that any ``explanation" of this form gives rise to one of the 
above type. Let us call a diagram of the form 
\[
\begin{tikzcd}
\hspace{.3in} & & & B' \arrow[dd,"q"]  \\
& & &  \\
& A' \arrow[rr, "e"] & & B & 
\end{tikzcd}
\]
(still with $q$ a quotient and $e$, an embedding) a {\em sub-quotient}. 

Thus, explanations are a particular kind of {\em span} in 
$\Prob$, while subquotients are a kind of co-span. In any 
category with pullbacks, spans compose according to the rule 
\[
\begin{tikzcd} 
Z \arrow[r,dashed] \arrow[d, dashed] 
\arrow[dr, phantom, "\scalebox{1.75}{$\lrcorner$}" , very near start, color=black]
&  Y \arrow[d] \arrow[r] & C\\
X \arrow[d] \arrow[r] &  B & \\
A &  & 
\end{tikzcd} 
\]
To make composites unique, we must choose a pullback for each pair $X, Y$. This makes composition of spans associative only up to isomorphism: spans serve as (horizontal) morphisms in a bicategory. See \cite{Johnson-Yau} for details. 

Pullbacks do not generally exist in $\Tesp$ or $\Prob$,  essentially, because they don't exist in the category 
of non-empty sets and mappings. But they do exist in 
special cases. In particular, 

\begin{proposition}\label{prop: pullbacks} Sub-quotients have pullbacks, which are explanations.
\end{proposition}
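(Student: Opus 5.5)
Given a sub-quotient $A' \xrightarrow{e} B \xleftarrow{q} B'$, the plan is to build the pullback $P$ explicitly as the preimage of $A'$ under $q$. We then check that the two legs $P \to A'$ and $P \to B'$ are, respectively, a quotient and an embedding, and finally verify the universal property.

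Let $X_{A'}$, $Y$, $Y'$ be the outcome-spaces of $\M(A')$, $\M(B)$, $\M(B')$. Since $e$ is positive and injective on outcomes, we identify $X_{A'}$ with its image $e(X_{A'}) \subseteq Y$; the tests of $A'$ are then sets $e(E)$. We set
\[\M(P) := \{ F \in \M(B') \;|\; q(F) = e(E) \text{ for some } E \in \M(A')\},\]
that is, the tests of $B'$ lying over tests of $A'$. For the states, we take
\[\Omega(P) := \{ \beta'|_{\bigcup \M(P)} \;|\; \beta' \in \Omega(B'),\ q^{\ast}\text{-compatible with } \Omega(A')\},\]
where the compatibility condition means that $\beta' \circ q^{-1}$, restricted to $e(X_{A'})$, equals $e^{\ast}$ of some $\beta \in \Omega(B)$ (more simply, $\beta' = q^{\ast}(\beta)$, using $q^{\ast}\Omega(B) \subseteq \Omega(B')$).

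The legs are $\psi : P \to B'$, given by inclusion of outcomes, and $\phi : P \to A'$, given by $\phi(y') = e^{-1}(q(y'))$. We first check that $\M(P)$ is irredundant, so that it is a test space; this is inherited from $\M(B')$. Next, $\phi$ is a quotient: every test $e(E)$ of $A'$ is $q(F)$ for some $F \in \M(B')$ because $q$ is surjective on tests, and then $F \in \M(P)$ by construction. Perspectivity is preserved because $e$ and $q$ preserve it, and the state condition follows from $\phi^{\ast}\alpha = q^{\ast}$ applied to a state of $B$ extending $\alpha$ via surjectivity of $e^{\ast}$. For $\psi$, positivity and injectivity on outcomes are immediate, and the states condition $\psi^{\ast}\Omega(B') = \Omega(P)$ is exactly why $\Omega(P)$ was defined as a set of restrictions. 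The main obstacle is condition (iii) for $\psi$, because inclusions of sub-test-spaces need not preserve perspectivity (this is the warning given earlier about supports). What I would try is the following. If $a \sim b$ in $\M(P)$, witnessed by $F, G \in \M(P)$ with a common complement, then the same witnesses lie in $\M(B')$, so $a \sim b$ there as well. The delicate direction is ensuring that perspectivities are not lost. This may force a modification: either restrict $\M(P)$ so that it is closed under the relevant complements, or use that $q$ reflects perspectivity on tests over $e(\M(A'))$.

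For the universal property, suppose $Z$ comes with maps $f : Z \to A'$ and $g : Z \to B'$ such that $e f = q g$. Then $g$ sends each outcome of $Z$ to an event of $B'$ lying over $e(X_{A'})$, and hence to an event of $\M(P)$. We define $h := g$, regarded as a map into $P$. Then $\psi h = g$ by construction, and $\phi h = f$ holds by injectivity of $e$. Uniqueness follows because $\psi$ is injective on outcomes. Finally, we must check that $h$ is a PM-morphism, that is, that $h^{\ast}$ carries $\Omega(P)$ into $\Omega(Z)$; this follows because each state of $P$ is a restriction of a state of $B'$ and $g$ is a PM-morphism. With this, the pullback span $(P, \phi, \psi)$ is an explanation of $A'$ by $B'$.
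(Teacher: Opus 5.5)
Your test space $\M(P)$, the two legs, and the mediating map are the same as the paper's. The genuine gap is the state space you put on $P$. Since $\psi$ is the inclusion, $\psi^{\ast}(\beta') = \beta'|_{\bigcup \M(P)}$ for \emph{every} $\beta' \in \Omega(B')$, and an embedding requires $\psi^{\ast}(\Omega(B')) = \Omega(P)$. Your compatibility condition, in either formulation, cuts $\Omega(P)$ down to the restrictions of the states $q^{\ast}(\beta)$ with $\beta \in \Omega(B)$. So in general $\psi^{\ast}(\Omega(B')) \not\subseteq \Omega(P)$, and $\psi$ is not even a PM-morphism. Your remark that the embedding condition ``is exactly why $\Omega(P)$ was defined as a set of restrictions'' holds only if \emph{all} restrictions are taken.

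Here is a concrete failure. Let $A' = B$ with $e$ the identity, and let $B$ have the single test $\{z_1,z_2\}$. Let $B'$ have the two disjoint tests $\{y_1,y_2\}$ and $\{y_1',y_2'\}$, with $q(y_i) = q(y_i') = z_i$, and give both $B$ and $B'$ all probability weights as states. Then $\M(P) = \M(B')$, and your $\Omega(P)$ is the diagonal $\beta'(y_1) = \beta'(y_1')$. The state of $B'$ with $\beta'(y_1) = 1$, $\beta'(y_1') = 0$ pulls back under $\psi$ to a weight outside it, although the pullback of an identity along $q$ ought to be $B'$ itself. The fix is the paper's choice $\Omega(P) = \{\beta'|_{\bigcup \M(P)} \mid \beta' \in \Omega(B')\}$, with no compatibility condition. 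Then $\psi$ is an embedding by construction. $\phi$ is still a PM quotient, because a quotient only needs $\phi^{\ast}(\Omega(A')) \subseteq \Omega(P)$, and your own computation $\phi^{\ast}(e^{\ast}\beta) = q^{\ast}(\beta)|_{\bigcup \M(P)}$ already gives this. Your state check for the mediating map goes through unchanged.

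Separately, the ``main obstacle'' you flag is not one. Condition (iii) asks only that perspectivity be preserved, and for the inclusion of a subfamily of tests this is immediate, as you observe: tests go to tests, and any two tests are perspective. Nothing about perspectivities being ``lost'' is needed for $\psi$, so no modification of $\M(P)$ is required on that account.

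A reflection-type property does matter in the universal property. For the co-restriction $h$ of $g$ to be a morphism into $P$, each $g(a)$ must be an event of $\M(P)$, i.e.\ lie inside a test of $B'$ mapped onto some $e(E)$. Perspectivity must also hold in $\M(P)$, not merely in $\M(B')$. Your ``hence'' does not deliver this. Take all state spaces full, let $A'$ have the single test $\{a_1,a_2\}$, and let $B$ have tests $\{c_1,c_2\}$, $\{c_1,c_3\}$ with $e(a_i) = c_i$. Let $B'$ have tests $\{b_1,b_2\}$, $\{b_1',b_3\}$ with $q(b_1) = q(b_1') = c_1$, $q(b_2) = c_2$, $q(b_3) = c_3$. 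Then $\M(P) = \{\{b_1,b_2\}\}$. Yet a one-outcome model $Z$ with $f(d) = \{a_1\}$ and $g(d) = \{b_1'\}$ satisfies $ef = qg$ and admits no mediating map into this $P$.

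The paper's proof passes over the same point: it identifies $\Ev(P)$ with $\tilde{q}^{-1}(\tilde{e}(\Ev(A')))$. So this is not where you diverge from the paper, but it is the step that actually needs justification, not condition (iii) for $\psi$.
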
 

{\em Proof:} Let $A \stackrel{e}{\rightarrow} C \stackrel{q}{\leftarrow} B$ be a sub-quotient; thus, $e : A \rightarrow C$ and $q : B \rightarrow C$ are respectively an embedding and a quotient of models. Letting $A$, $B$ and $C$ have  
$X$, $Y$ and $Z$, respectively, we can view $e$ and $f$ 
as mappings $e : X \rightarrow Z$ and $q : Y \rightarrow Z$, with $e$ injective and $q$ surjective. Define 
\[Y' = q^{-1}(e(X))\]
and let $\M'  =  \{ F \subseteq Y' | q(F) \in e(\M(A))\}$.
Thus, $\M'$ is a sub-test space of $\M(B)$, precisely the one carried by $q$ onto $\M(A)$'s image in $\M(C)$.  Note that since $q$ is a quotient, every outcome in $e(X)$ arises as 
$q(y)$ for some $y \in Y$, so $\bigcup \M' = q^{-1}(e(X)) = Y'$. Since $e$ is injective, we have an inverse mapping 
$e^{-1} : e(X) \rightarrow X$. By construction, $q' := e^{-1} \circ q|_{Y'}$ is a quotient mapping from $\M'$ onto $\M(A)$,
and $e'$ = the inclusion map $Y' \rightarrow Y$ defines 
a test-space embedding from $\M'$ to  $\M(B)$. 

Now construct a model $B' = (\M',\Omega')$ where $\Omega' = \Omega(B)|_{Y'}$, that is, $\Omega'$ consists of restrictions to $Y'$ of states of $B$. The inclusion mapping $e' : Y' \rightarrow Y$ then defines an embedding $B' \rightarrow B$. 
We need to show that $q' : Y' \rightarrow X(A)$ is also 
a PM -morphism. To this end, let $\alpha \in \Omega(A)$. 
Since $e$ is an embedding, there is some $\gamma \in \Omega(C)$ with $\alpha = e^{\ast}(\gamma)$. Let $\beta' = q^{\ast}(\gamma)|_{Y'}$, a state of $B'$. We claim that 
$\beta' = (q')^{\ast}(\alpha)$. For if $y \in Y'$ 
and $z = q(y)$, then $z \in e(X)$, i.e., there is a 
unique $x = e^{-1}(z)$ with $q(y) = e(x)$. Now we have 
\begin{eqnarray*}
(q')^{\ast}(\alpha)  
& = & \alpha \circ e^{-1} \circ q|_{Y'} \\
& = & \gamma \circ e) \circ e^{-1} \circ q|_{Y'} \\
& = & \gamma \circ q|_{Y'} \\
& = & (\gamma \circ q)|_{Y'}\\
& = & q^{\ast}(\gamma) |_{Y'}\\
& = & \beta.
\end{eqnarray*} 
This shows that $q'$ is also a morphism of models. 

To see that that $(B',e',q')$ is the pullback of $q$ and $e$, apply the $\Ev$ functor: we find that $\Ev(B') = \tilde{q}^{-1}(\tilde{e}(\Ev(A)))$. Since $\tilde{e}$ and $\tilde{q}$ are still injective and surjective, respectively, we see that $\Ev(B')$ is the pullback of these in $\Set$. Thus, given morphisms $\phi : D \rightarrow A$ and $\psi : D \rightarrow B$ with $e \circ \phi = q \circ \psi$, and regarding these as mappings $U \rightarrow \Ev(A)$  and $U \rightarrow \Ev(B)$ where $U = \bigcup \M(D)$, we have a unique mapping 
$\xi : U \rightarrow \Ev(D)$ with $\tilde{q'} \circ \tilde{\xi} = \tilde{\phi}$ and $\tilde{e'} \circ \tilde{\xi} = \tilde{\psi}$. This is given by $\psi$'s co-restriction to $\Ev(B') \subseteq \Ev(B)$ (as $\psi$'s range lies here). Since $\psi$ is a morphism of test spaces, so is $\xi$. It remains to check that $\xi$ is a PM -morphism. Let $\beta' \in \Omega'$. Then $\beta' = \beta|_{Y'}$ for 
some $\beta \in \Omega(B)$. Thus, 
\[\xi^{\ast}(\beta') = (e' \circ \xi)^{\ast}(\beta) = \psi^{\ast}(\beta);\]
since $\psi$ is a PM -morphism, $\psi^{\ast}(\beta) \in \Omega(D)$. Thus, $\xi$ is a morphism of models, and 
we are done. $\Box$

It follows that explanations compose associatively (up to a canonical isomorphism), and we can, if we wish, form a bicategory of probabilistic models and explanations. 
We will not pursue this here. For our purposes, the important points are that our notion of explanation is indeed transitive, and that quotients and embeddings are (in their dual ways) instances of explanations.

\section{Classical Explanations} 

We will now show that every locally finite probabilistic model has an explanation, and indeed a {\em canonical} one, in terms of a Borel model.  This is largely a repackaging of well-known results in the present framework, but we think this offers some slight additional generality, and a more-than-slight increment of clarity. 

\subsection{Weak-$\ast$ integrals and barycenters} 

We will need a few facts from analysis. First, recall that if $\mu$ is a finitely additive probability measure on an algebra $\Sigma$ of subsets of a set $S$, and if $f : S \rightarrow \R$ is a bounded $\Sigma$-measurable function, then $\int_{S} f d\mu$ exists, with $|\int_{S} f d\mu| \leq \|f\| \|\mu\|$ 
where $\|f\|$ and $\|\mu\|$ are the supremum and variation norms of $f$ and $\mu$, respectively.  This gives us a 
bounded linear mapping $M(S) \rightarrow B(S)^{\ast}$, 
where  $B(S)$ is the space of bounded measurable functions on $(S,\Sigma)$ and $M(S)$ is the space of bounded finitely-additive measures on $\Sigma$; in fact, one can show that 
this is a Banach-space isomorphism \cite[Theorem IV.5.1]{DS}.  Accordingly, from now on we will identify 
measures $\mu \in M(S)$ with the corresponding 
functionals, writing $\mu(f)$ for $\int_{S} f(s) d\mu(s)$ whenever convenient. 

{\bf Weak-$\ast$ integrals} $V$ is any normed  space, a function $\phi : S \rightarrow V^{\ast}$ is {\em weak-$\ast$ measurable} iff 
the function $\phi^{\ast}(v) : s \mapsto \phi(s)(v)$ is measurable for every $v \in V$. If 
$\phi$ is bounded, so is $\phi^{\ast}(v)$ for every $v \in V$, and thus 
if  $\mu$ is a bounded finitely-additive measure on $S$, we can form 
\[\hat{\mu}(\phi): v \mapsto \int_{S} \phi(s)(v) d\mu(s)\] 
which defines an element of $V^{\ast}$. We also write 
\[\hat{\mu}(\phi) = \int_{S} \phi(s) d\mu(s):\]
this is called the {\em weak-$\ast$ integral} of 
$\phi$ with respect to $\mu$.  If $\mu$ is a probability 
measure, we can regard this as the average of the functionals $\phi(s)$ with respect to $\mu$.

Now consider the case in which $V = B(S)$ and $V^{\ast} = M(S)$.  If $\delta_{s}$ is the point-mass at $s$, the mapping $s \mapsto \delta_{s}$ is bounded and weak-$\ast$ measurable (the variation norm of $\delta_s$ is $1$, and for $f \in B(S,\Sigma)$, $\delta_{s}(f) = f(s)$.)  Hence, for every probability measure $\mu$ on $S$, we can average the point masses $\delta_s$ against  $\mu$, in the weak-$\ast$ sense, to obtain 
\begin{equation}\label{eq: measures own barycenters} 
\hat{\mu} := \hat{\mu}(\delta) = \int \delta_{s} d\mu(s) \in B(S)^{\ast} = M(S).
\end{equation} 
In fact, $\hat{\mu} = \mu$. To see this, let $a \in \Sigma$ with indicator function $I_{a}$: we have 
\[\hat{\mu}(I_{a}) = \int_{S} \delta_{s}(I_{a}) d\mu(s) = \int_{S} I_{a}(s) d\mu(s) = \mu(a).\]

{\bf Barycenters} If $K$ is a compact convex subset of a topological vector space $V$. Let $A(K)$ be the space of all continuous affine (convex-linear) functionals on $K$, and let $\mu$ be a Baire 
probability measure on $K$. Then  there is a unique element $\hat{\mu} \in K$, the {\em barycenter} of $K$, with $f(\hat{\mu}) = \int_{K} f d\mu$ for all $f \in A(K)$. The mapping $\mu \mapsto \hat{\mu}$ is clearly bounded and linear. Moreover, we have the following:
\cite[Theorems 9.1 and 9.2]{Simon}:

\begin{theorem}\label{thm: cheap choquet} Every point $\alpha \in K$ can be obtained as $\hat{\mu}$ for some measure $\mu$ supported on the closure of $K$'s extreme points. 
\end{theorem}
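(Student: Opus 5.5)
The natural route is the standard Krein--Milman plus Riesz-representation argument. Let $E = \overline{\partial_e K}$ denote the closure of the set of extreme points of $K$; as a closed subset of the compact set $K$, it is compact. First I would embed $A(K)$ into $C(E)$ by restriction, $f \mapsto f|_{E}$. This map is linear, and I would check that it is isometric for the supremum norms, and in particular injective. The key input is the Krein--Milman theorem: $K = \overline{\mathrm{conv}}(\partial_e K)$. Since each $f \in A(K)$ is affine and continuous, its supremum over $K$ equals its supremum over $\mathrm{conv}(\partial_e K)$, hence over $\partial_e K$, hence over $E$. Applying the same reasoning to $-f$ gives $\|f\|_{K} = \|f|_{E}\|_{E}$. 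Constants lie in $A(K)$ and restrict to constants.

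Next, fix $\alpha \in K$ and define the functional $L_\alpha$ on the subspace $A(K)|_{E} \subseteq C(E)$ by $L_\alpha(f|_E) = f(\alpha)$. This is well defined by injectivity. It satisfies $L_\alpha(1) = 1$ and $|L_\alpha(g)| \le \|g\|$. By Hahn--Banach, extend it to a functional $\tilde L$ on all of $C(E)$ with norm $1$. Since $\tilde L(1) = 1 = \|\tilde L\|$, the standard lemma gives that $\tilde L$ is positive. By the Riesz representation theorem, $\tilde L$ is integration against a regular Borel probability measure $\mu$ on $E$. Restricting $\mu$ to Baire sets and pushing it forward to $K$ gives a Baire probability measure on $K$ supported on $E$ with $\int_K f\,d\mu = f(\alpha)$ for every $f \in A(K)$.

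Finally, the barycenter $\hat\mu$ is characterized uniquely by $f(\hat\mu) = \int_K f\,d\mu$ for all $f \in A(K)$. Uniqueness holds because continuous affine functionals separate points of $K$: restrictions of continuous linear functionals of a locally convex space already do this, which is the setting of \cite{Simon}. Hence $\hat\mu = \alpha$.

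The main obstacle is the isometric-restriction step, since it is where Krein--Milman and the affineness of the functionals are actually used. I also need to be careful that the ambient space is locally convex, so that Krein--Milman and point separation apply; in the applications here $K \subseteq [0,1]^X$ with the product topology, so this holds. The positivity-of-norm-one-unital-functionals lemma and the Riesz representation theorem are routine and would simply be cited.
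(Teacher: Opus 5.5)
Your argument is correct. The paper gives no proof of its own here; it simply cites Simon's Theorems 9.1--9.2. So the useful comparison is with the usual textbook proof of that result, which goes through weak-$\ast$ compactness. There, the probability measures on $E=\overline{\partial_e K}$ form a weak-$\ast$ compact convex set, and the barycenter map from this set to $K$ is continuous and affine. Its image is therefore a compact convex subset of $K$ that contains every extreme point (as the barycenter of a point mass), hence is all of $K$ by Krein--Milman.

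Your route replaces this with a maximum principle: $\sup_K f=\sup_E f$ for $f\in A(K)$, i.e.\ $E$ is a boundary for $A(K)$. You then apply Hahn--Banach, the lemma that a unital functional of norm one is positive, and Riesz. This is more economical. You use only the uniqueness half of the barycenter theorem (point separation), since $\alpha$ itself witnesses the barycenter identity. You never need existence or continuity of the barycenter map, or Alaoglu. It is also the boundary viewpoint from which Choquet theory proper begins. The compactness route gives, in exchange, the more structural fact that the barycenter map from probability measures on $E$ onto $K$ is a continuous affine surjection.

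Two of your side remarks matter more than they might seem. First, local convexity is a necessary correction, not pedantry. The paper's statement assumes only a topological vector space, and there the result can fail outright: Roberts' compact convex sets in $L^p$, $0<p<1$, have no extreme points. The correction is harmless for the application, since $[0,1]^X$ is locally convex. Second, when $K$ is non-metrizable, $E$ need not be a Baire set. So ``supported on $E$'' is best read for the regular Borel measure that Riesz gives you, and that is exactly the form in which the proof of Theorem~\ref{thm: sc submodel of classical} uses it.
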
 

\tempout{
{\blue [Notation: Write $\Delta(X)$ for the (Bauer) simplex of all 
Borel probability measures on a compact Hausdorff space $X$. Note 
that the extreme points of $\Delta(X)$ are simply the point-masses 
$\delta_x$ associated with points of $x$, and that the set of 
these is weak-$\ast$ closed, hence compact, in $C(X)^{\ast}$. }
}

\subsection{Borelification} 

A version of the following construction appears in \cite{SEP}

\begin{lemma}\label{lem: DF compact}
If $\mc{A}$ is a locally finite, semi-classical space with outcome-space~$X$, then the set $\Pr(\mc{A})$ of all probability weights and the set $S$ of \ts{df} probability weights on $\mc{A}$ are both closed, hence compact, in $[0,1]^X$, and $S = \Ext(\Pr(\mc{A}))$.
\end{lemma}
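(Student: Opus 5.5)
The plan is to use the two standing hypotheses as follows. \emph{Local finiteness} makes every defining constraint of $\Pr(\mc{A})$ depend on finitely many coordinates. \emph{Semi-classicality}, which I take to mean that distinct tests of $\mc{A}$ are disjoint, lets me perturb a weight on one test without disturbing any other test. I also take ``\ts{df}'' (dispersion-free) to mean a probability weight with values in $\{0,1\}$. The proof then splits into three parts: closedness of $\Pr(\mc{A})$, closedness of $S$, and the identification $S = \Ext(\Pr(\mc{A}))$. Compactness of both sets follows from closedness and Tychonoff's theorem applied to $[0,1]^X$.

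\emph{Closedness.} For each test $E \in \mc{A}$, the map $\omega \mapsto \sum_{x \in E} \omega(x)$ is a finite sum of coordinate projections, so it is continuous on $[0,1]^X$ in the product topology. Hence $\{\omega : \sum_{x\in E}\omega(x) = 1\}$ is closed, and
\[\Pr(\mc{A}) = \bigcap_{E \in \mc{A}} \Big\{\omega \in [0,1]^X : \sum_{x \in E}\omega(x) = 1\Big\}\]
is closed. Similarly, $\{0,1\}^X$ is closed in $[0,1]^X$, being a product of closed sets. So $S = \Pr(\mc{A}) \cap \{0,1\}^X$ is closed as well. Neither step uses semi-classicality, only local finiteness.

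\emph{$S \subseteq \Ext(\Pr(\mc{A}))$.} Suppose $\omega \in S$ and $\omega = t\alpha + (1-t)\beta$ with $0<t<1$ and $\alpha,\beta \in \Pr(\mc{A})$. At each $x$ the value $\omega(x)\in\{0,1\}$ is an extreme point of $[0,1]$. This forces $\alpha(x) = \beta(x) = \omega(x)$, so $\omega$ is extreme.

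\emph{$\Ext(\Pr(\mc{A})) \subseteq S$.} This is the only step that needs semi-classicality, and I expect it to be the main point to get right. Suppose $\omega \in \Pr(\mc{A})$ is not dispersion-free. Then there is an outcome $x$ with $0<\omega(x)<1$. Let $E$ be the test containing $x$. Since $\sum_{y\in E}\omega(y)=1$ and $\omega(x)<1$, there is a second outcome $y \in E$, $y \neq x$, with $0<\omega(y)<1$. Choose $\varepsilon>0$ smaller than $\omega(x)$, $\omega(y)$, $1-\omega(x)$ and $1-\omega(y)$, and define
\[\omega_{\pm} = \omega \pm \varepsilon(\delta_x - \delta_y).\]
Both functions take values in $[0,1]$ and still sum to $1$ on $E$. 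Because tests are pairwise disjoint, $x$ and $y$ lie in no test other than $E$, so every other test sum is unchanged. Hence $\omega_\pm \in \Pr(\mc{A})$, $\omega_+ \neq \omega_-$, and $\omega = \tfrac12(\omega_+ + \omega_-)$. So $\omega$ is not extreme.

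Without semi-classicality this perturbation could break the normalization on some other test through $x$ or $y$. That failure is exactly why the lemma is restricted to semi-classical spaces.
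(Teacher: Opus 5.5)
Your proposal is correct and takes essentially the paper's route. Closedness comes from each test constraint involving only finitely many coordinates, $S = \Pr(\mc{A}) \cap \{0,1\}^X$, and a non-\ts{df} weight fails to be extreme because you can alter it on the single test where it is not a point mass, which disjointness of tests permits. Your explicit perturbation $\omega \pm \varepsilon(\delta_x - \delta_y)$ is a concrete version of the paper's step of replacing $\alpha_F$ by the two components of a nontrivial convex decomposition in the simplex of weights on $F$, and you also spell out the \ts{df}-implies-extreme direction that the paper just calls clear.
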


\begin{proof}
Let $K = \Pr(\mc{A})$, and suppose $\alpha\in[0,1]^X$ is not in $K$. Then for some test $E \in \mc{A}$, $\sum\{\alpha(x)\mid x\in E\}\neq 1$. Hence, if $\beta\in[0,1]^X$ with $\beta(x)$ sufficiently close to $\alpha(x)$ for each $x$ in the finite set $E$, $\beta$ will also fail to be a probability weight, and thus $K$ is closed. Since $S$ is the intersection of $K$ with the closed set $\{0,1\}^X$, it is also closed. To see that $S = \Ext(K)$, note that 
since $\mc{A}$ is semiclassical, the probability weights $\alpha \in \mc{A}$ amount to indexed families $(\alpha_E)$ where $\alpha_E$ is a probability weight on $E$ for each $E\in\mc{A}$. If one of these, say $\alpha_F$, is a non-trivial convex combination $\alpha_F=\lambda \beta_F+(1-\lambda)\gamma_F$ where $\beta_F$ and $\gamma_F$ are 
probability weights on $F$, then define $\beta = (\beta_{E})$ by setting $\beta_{E} = \alpha_{E}$ for 
all $E \not = F$, with $\beta_{F}$ the given probability 
weight. Define $\gamma = (\gamma_{E})$ similarly. Then  $\alpha =\lambda \beta + (1-\lambda)\gamma$ is also a non-trivial convex combination. So if $(\alpha_E)$ belongs to $Ext(K)$, each  $\alpha_E$ must be an extreme point of the set of probability weights on $E$, and hence, a point mass, and thus $\alpha$ is \ts{df}. It is clear that  \ts{df} probability weight is an extreme point. 
\end{proof}


We are now going to show that {\em almost} every semi-classical test space $\A$ can be embedded into a Borel test space. The only potential obstacle is that if 
$x$ and $y$ are distinct outcomes with $\{x\}, \{y\} \in \A$, then every \ts{df} state assigns probability one to 
both. 

\begin{theorem}
\label{thm: sc submodel of classical}
Every locally finite, semiclassical PM $A=(\mc{A},\Omega)$ having at most one single-outcome test, has a canonical embedding into a classical model. 
\end{theorem}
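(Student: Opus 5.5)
The plan is to take $S$ to be the set of \ts{df} (dispersion-free) probability weights on $\mc{A}$. By Lemma \ref{lem: DF compact}, $S$ is a compact Hausdorff subspace of $\{0,1\}^X \subseteq [0,1]^X$, and $S = \Ext(\Pr(\mc{A}))$. The candidate classical model is a Borel model on $S$, built from the Borel test space $\B(S)$. For each outcome $x \in X$ let
\[\phi(x) = \{ s \in S \mid s(x) = 1\},\]
a clopen subset of $S$, being the preimage of $1$ under the continuous coordinate projection. The state space of the classical model will be $\Omega' = \{\mu \in \Pr(\B(S)) \mid \hat{\mu} \in \Omega\}$: the Borel probability measures whose barycenter, under the evaluation map $s \mapsto s$ into $\Pr(\mc{A})$, lies in $\Omega$. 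Equivalently, $\mu(\phi(x)) = \alpha(x)$ for some $\alpha \in \Omega$, since $\hat{\mu}(x) = \int_S s(x)\,d\mu(s) = \mu(\phi(x))$.

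The first step is to check that $\phi$ is an interpretation $\mc{A} \to \B(S)$. For a test $E$, every $s \in S$ takes value $1$ on exactly one $x \in E$. Hence the sets $\phi(x)$, $x \in E$, are pairwise disjoint and cover $S$, so after deleting empty ones they form a finite Borel partition of $S$, which is a test of $\B(S)$. Orthogonality is preserved, and perspectivity is automatic because all tests map to partitions of $S$.

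The second step, positivity and injectivity, is where semiclassicality and the hypothesis on single-outcome tests enter. Semiclassical means distinct tests are disjoint, so a \ts{df} weight can be built by freely choosing one outcome in each test. For positivity, given $x$ in test $E$, choose $x$ in $E$ and any outcome elsewhere, so $\phi(x) \neq \emptyset$. For injectivity, take $x \neq y$. If they lie in the same test, then $\phi(x) \cap \phi(y) = \emptyset$ and both sets are nonempty, so they differ. If they lie in different tests $E \neq F$, choose $s$ with $s(x)=1$ and, when $|F| \geq 2$, $s(y)=0$. At most one of $E, F$ is a singleton, so by swapping roles if necessary we can always arrange that $F$ has at least two outcomes. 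This gives $\phi(x) \neq \phi(y)$ and settles the case analysis.

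The third step is surjectivity of $\phi^{\ast} : \Omega' \to \Omega$, and this is the main point. We have $\phi^{\ast}(\mu)(x) = \mu(\phi(x)) = \hat{\mu}(x)$, so $\phi^{\ast}(\Omega') \subseteq \Omega$ by the definition of $\Omega'$. Conversely, given $\alpha \in \Omega$, Theorem \ref{thm: cheap choquet} supplies a measure $\mu$ on the closure of $\Ext(\Pr(\mc{A}))$ with barycenter $\alpha$. This closure is $S$ by Lemma \ref{lem: DF compact}. The functionals $\beta \mapsto \beta(x)$ are continuous and affine, so $\mu(\phi(x)) = \alpha(x)$, and $\mu \in \Omega'$. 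The subtlety to watch is that Simon's measures are Baire or regular Borel measures, whereas $\B(S)$-weights are only finitely additive. Every countably additive measure restricts to a finitely additive one, so this direction causes no problem.

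Canonicity comes from the fact that $S$, $\phi$ and $\Omega'$ are determined by $A$ alone. One could also take $\Omega'$ to be the image of Choquet measures only, but the preimage definition is cleaner and functorial.
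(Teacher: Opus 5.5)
Your proposal is correct and follows essentially the same route as the paper. It uses the same space $S$ of dispersion-free weights, the same map $\phi(x)=\{s\in S\mid s(x)=1\}$, the same state space $\Omega'$ of measures with barycenter in $\Omega$, and the same appeal to Lemma~\ref{lem: DF compact} and Theorem~\ref{thm: cheap choquet} for surjectivity of $\phi^{\ast}$. Your explicit case analysis for injectivity, and your remark on finitely versus countably additive measures, fill in details the paper leaves implicit.
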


\begin{proof}
Let $S$ be the set of d.f. probability weights on $\mc{A}$.  By Lemma \ref{lem: DF compact}, $S$ and $\Pr(\mc{A})$ are compact with $S = \Ext(\Pr(\mc{A})$. Let $\Sigma$ be the Borel algebra of $S$ and $\mc{B}(S)$ be the  corresponding Borel test space. That is, 
$\mc{B}(S)$ consists of finite partitions of $S$ by Borel sets.  Define a map $\phi$ from the outcomes of $\mc{A}$ to the outcomes of $\mc{B}$ --- the non-empty Borel sets --- by setting 
\[
\phi(x)\,\,=\,\,\{s\in S\mid s(x)=1\}.
\]
Note that $\phi(x)$ is the intersection of a closed set of $[0,1]^X$ with $S$, hence, is indeed a Borel set of $S$. 
Now, if $x \perp y$, $\phi(x) \cap \phi(y) = \emptyset$, 
and $\{\phi(x) | x \in E, \phi(x) \not = \emptyset\}$ 
is a partition of $S$. It is straightforward 
that if $a, b \in \Ev(\mc{A})$ with $a \sim b$, $\bigcup \phi(a) = \bigcup \phi(b)$, 
so $\phi(a) \sim \phi(b)$ in $\mc{B}(S)$. Thus, 
$\phi$ is a test-preserving test-space morphism. This much holds regardless of $\mc{A}$-s structure. But since $\mc{A}$ is semiclassical, we also have $\phi(x) \not = \emptyset$ 
for every $x \in X$. If $x$ and $y$ are distinct outcomes, 
at most constitutes a one-outcome test, so there exists 
some $s \in S$ with $s(x) \not = s(y)$. Thus, $\phi$ is 
injective on outcomes, and thus, a test-space embedding. 
Finally, suppose $\alpha\in\Omega$. By Theorem \ref{thm: cheap choquet}, $\alpha$ is the barycenter, $\hat{\mu}$, of a regular Borel probability measure $\mu$ on $S$. This means that for the continuous linear functional $\hat{x}:\R^X\to\R$ of evaluation at $x$, we have 
\[
\hat{x}(\alpha)\,\,=\,\,\int\nolimits_S\,\hat{x}\,\,d\mu.
\]
For $s\in S$ we have  $\hat{x}(s)=1$ iff $s\in\phi(x)$ and is 0 otherwise, so $\int_S\hat{x}\,d\mu=\mu(\phi(x))$. Thus 
\[
\alpha(x)\,\,=\,\,\mu(\varphi(x)),
\]
or, equivalently, $\alpha = \varphi^{\ast}(\mu)$. 
 Taking $\Omega'$ to be the set of all probability measures $\mu$ on  $(S,\Sigma)$ with $\hat{\mu}$ in $\Omega$, and letting $B = (\mc{B},\Omega')$,  $\phi$ is a PM-embedding in the sense of Definition \ref{def: PM embeddings and quotients}.  
\end{proof}

\noindent {\bf Semiclassical covers} We have just seen that locally finite semiclassical PMs can always be interpreted  as sub-models of classical models. On the other hand, every locally finite PM is a ``quotient'' of a locally finite semi-classical model in an obvious way (\cite[p. 184]{RF-Operational}, see also \cite[Section 6.2]{SEP}). Let $\mc{A}$ be a test space. For 
each $E \in \mc{A}$, let $\tilde{E}\ =\ \{(x,E)\mid x\in E\}$. The \emph{semiclassical cover} of $\mc{A}$ is the test space $\tilde{\mc{A}}$ given by
\[
\tilde{\mc{A}}\,\,=\,\,\{\tilde{E}\mid E\in\mc{A}\}.
\]
We write $\tilde{X}$ for $\bigcup \mc{A} = \{ (x,E) \mid x \in E \in \mc{A}\}$. 
Every probability weight $\alpha \in \Pr(\mc{A})$ gives rise to a probability weight $\tilde{\alpha}$ on $\tilde{\mc{A}}$ defined by $\tilde{\alpha}(x,E) = \alpha(x)$. By the  semiclassical cover of a PM $A = (\mc{A},\Omega)$, we 
mean the model $\tilde{A} = (\tilde{\mc{A}},\tilde{\Omega})$ where 
\[\tilde{\Omega} = \{ \tilde{\alpha} \mid \alpha \in \Omega\}.\] It is then straightforward that the mapping $q : \tilde{X} \rightarrow X$ given by $q(x,E) = x$ is a quotient morphism of models. 

Thus, every PM is a quotient of its semi-classical cover. Since every locally finite semiclassical model has a canonical classical embedding, we have the 

\begin{theorem}\label{thm: Borelification} Every locally finite PM $A$ with at most one one-outcome test, has a canonical classical explanation 
\[
\begin{tikzcd} 
\tilde{A} \arrow[rr, "\phi"] \arrow[dd, "q"] & & \B(S)\\
& & \\
A & & 
\end{tikzcd}
\]
where $S$ is the set of dispersion-free probability weights 
on $\tilde{A}$, understood as a subspace of $[0,1]^{\tilde{X}}$.
\end{theorem}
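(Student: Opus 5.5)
The plan is to assemble the explanation from the two pieces already in hand: the quotient $q : \tilde{A} \rightarrow A$ from the semiclassical cover, and the embedding $\phi : \tilde{A} \rightarrow B$ given by Theorem \ref{thm: sc submodel of classical}, where $B = (\B(S),\Omega')$. By the definition of an explanation, the triple $(\tilde{A}, q, \phi)$ is then an explanation of $A$ by the Borel model $B$. What remains is to check that the hypotheses of Theorem \ref{thm: sc submodel of classical} hold for $\tilde{A}$, and that $q$ really is a PM quotient morphism.

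\textbf{Step 1: $\tilde{A}$ satisfies the hypotheses of Theorem \ref{thm: sc submodel of classical}.}
\begin{itemize}
\item \emph{Semiclassical.} Distinct tests $\tilde{E} \neq \tilde{F}$ are disjoint, since the outcome $(x,E)$ records its test.
\item \emph{Locally finite.} $|\tilde{E}| = |E|$, so this is inherited from $A$.
\item \emph{At most one single-outcome test.} A one-outcome test $\tilde{E} = \{(x,E)\}$ arises exactly when $E = \{x\}$ is a one-outcome test of $A$, so this condition is inherited as well.
\end{itemize}
The theorem therefore yields the canonical embedding $\phi(x,E) = \{s \in S \mid s(x,E) = 1\}$, where $S \subseteq [0,1]^{\tilde{X}}$ is the compact set of dispersion-free weights on $\tilde{\mc{A}}$. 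The state space is $\Omega'$, the set of regular Borel probability measures on $S$ whose barycenters lie in $\tilde{\Omega}$.

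\textbf{Step 2: $q$ is a quotient morphism.}
\begin{itemize}
\item \emph{Test-space morphism.} $q$ maps each $\tilde{E}$ bijectively onto $E$, so it is test-preserving and surjective on tests. For condition (iii), any two tests of $\tilde{\mc{A}}$ map to tests of $A$, which are perspective.
\item \emph{Orthogonality.} This needs a remark: $(x,E) \perp (y,E)$ with $x \neq y$ gives $x \perp y$ in $E$.
\item \emph{States.} $q^{\ast}(\tilde{\alpha})$ is computed on $\tilde{\mc{A}}$, while $\Omega(A)$ lives on $A$. The cleaner statement is that $q^{\ast}(\alpha) = \alpha \circ q = \tilde{\alpha}$ for every $\alpha \in \Omega$, so $q^{\ast}(\Omega(A)) = \tilde{\Omega}$. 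Read this way, the inclusion required by Definition \ref{def: PM embeddings and quotients} holds, with equality.
\end{itemize}

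\textbf{Step 3: canonicity.} Every ingredient is determined by $A$ alone: the cover, the set $S$, the map $\phi$, and $\Omega'$. No choices are involved. The Choquet-type Theorem \ref{thm: cheap choquet} is used only to show that $\phi^{\ast}(\Omega') = \tilde{\Omega}$, not to select a measure.

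\textbf{Main obstacle.} I expect the delicate point to be the direction conventions in Definition \ref{def: PM embeddings and quotients}(b). Here $q$ is written as a map on outcomes $\tilde{X} \rightarrow X$, but the adjoint pulls states of $A$ back to $\tilde{A}$. I would resolve this exactly as in the state bullet of Step 2: interpret the condition as ``$q^{\ast}$ carries $\Omega(A)$ into $\Omega(\tilde{A})$'' and verify it via $\tilde{\alpha} = \alpha \circ q$. Everything else is routine bookkeeping inherited from the preceding results.
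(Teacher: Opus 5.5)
Your proposal is correct and follows the paper's own argument: $A$ is a quotient of its semiclassical cover via $q(x,E)=x$, and $\tilde{A}$, being locally finite and semiclassical, embeds canonically into a Borel model by Theorem~\ref{thm: sc submodel of classical}. Your explicit checks fill in details the paper leaves as straightforward: that $\tilde{A}$ inherits the at-most-one-single-outcome-test condition, and that $q^{\ast}(\Omega(A)) = \tilde{\Omega}$.
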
 

{\bf Borelification as a functor} Let us write 
$\Bor(A)$ for the Borel model arising in the canonical classical 
explanation of a probabilistic model $A$. Thus, 
$\Bor(A)$'s test space is the space of finite partitions 
of the set $S$ of dispersion-free states on $\tilde{S}$ by 
Borel sets, and $\Omega(\Bor(A))$ is the set of finitely-additive Borel probability measures on $S$ having barycenter in 
$\Omega(A)$. We will refer to $\Bor(A)$ as the {\em Borelification} of $A$. It is not difficult to show that $A \mapsto \Bor(A)$ is the object part of an endofunctor $\Bor : \Prob \rightarrow \Prob$. 

As discussed elsewhere \cite{LTS, Short-Course}, one can view a {\em probabilistic theory} as a functor $F : \Cat \rightarrow \Prob$, where $\Cat$ is a category representing systems and processes. The idea is that to each system $A$, $M$ attaches a concrete probabilistic mode, $M(A) = (\M(A),\Omega(A))$, and to each process $f : A \rightarrow B$, $M$ attaches a morphism $Mf : M(A) \rightarrow M(B)$ of probabilistic models.  Thus, if $M : \Cat \rightarrow \Prob$ is any probabilistic theory, we have a canonical ``Borelification" $\Bor \circ M$ from $\Cat$ to the category of Borel models and morphisms between these.

\subsection{Ontological Representations} 
A related formulation of these ideas can be given in terms 
of so-called {\em ontological representations} \cite{Gheorgiu-Heunen, Leifer, Spekkens}. Adapted to our current language, the definition 
is as follows:

\begin{definition}[\bf Ontological representations] \label{def: ontological representation} {\em An {\em ontological representation} of a PM $A$ is measurable space $(S,\Sigma)$ of ``ontic states''  with a function $p$ assigning to each $s \in S$ and each test $E \in \mc{A}$ a probability weight $x \mapsto p(x\,|\,E,s)$ on $E$ so that  
\begin{itemize}
\vspace{-.1in}
 \item[(a)] For each $x \in E$, the function 
$s \mapsto p(x|E,s)$ is measurable, and 
\item[(b)] For each $\alpha \in \Omega_A$, there is a  probability measure $\mu$ on $S$ with 
\[\alpha(x) \,\,=\,\, \int_S p(x\, |\, E, s)\, d\mu.\] 
\end{itemize}
\vspace{-.2in}
If $p$ takes only values $0,1$ we say the representation is {\em sharp} and if $p(x\,|\,E,s)$ is independent of $E$ for every $x$ and $s$, it is {\em non-contextual}. 
}
\end{definition}

Perhaps the simplest example is the {\em Beltrametti-Bugajski representation} \cite{BelBug, Holevo, Leifer}: 
given a model $A = (\M,\Omega)$ with $\Omega$ compact, let 
$S$ be the set of extreme points of $\Omega$, i.e., 
the pure states of $A$, with Borel structure inherited 
from $[0,1]^{X(A)}$. For every $x \in E \in \M$, 
and every $\alpha \in S$, let $p(x|E,\alpha) = \alpha(x)$. 
This is evidently non-contextual. As long as $\Omega$ 
is separating, it yields an embedding of 
$A$ into the unsharp model associated with $S$. 

More generally, an ontological representation of $A$ 
based on a measurable space $(S,\Sigma)$ yields 
an embedding of $\tilde{A}$ into the unsharp Borel model 
$\B^{o}(S,\Sigma)$ associated with $(S,\Sigma)$, as discussed in 
Section \ref{para: unsharp models}.

Recall that the tests of this model consist of (graphs of) lists of positive bounded measurable functions summing to one, and that the states are induced by finitely additive probability measures on $(S,\Sigma)$ in an obvious way.
Given an ontological representation $p$ of $A = (\A,\Omega)$, define $\phi : \tilde{X} \rightarrow B(S)$ by $\phi(x,E)(s) = p(x|E,s)$. This extends uniquely to a morphism $\phi^{o} : \tilde{A}^{o} \rightarrow \B^{o}(S)$, which will be injective iff $\Omega$ is a separating set of states. 
For any probability measure $\mu$ on $(S,\Sigma)$, 
\[{\phi^{o ~\ast}}(\mu)(x,E) = \hat{\mu}(f_{x,E}) = 
\int_{S} f_{x,E}(s) d\mu(s),\]
so condition (c) tells us that $\phi^{\ast}$ is 
surjective on states.}  Thus, we have an embedding $\phi^{o} : (\tilde{A})^{o} \rightarrow \M^{o}(B(S,\Sigma))$, where $B(S,\Sigma)$ is the order-unit space of bounded measurable real-valued functions on $(S,\Sigma)$, as in \ref{para: unsharp models}. 
Moreover, we have a natural 
quotient mapping $q : (\tilde{A})^{o} \rightarrow A$, given by 
$q(i,(x,E)) = x$. Thus, we have an explanation 
\[\begin{tikzcd}
\left ( \tilde{A} \right )^{o} \arrow[dd, "q"] \arrow[rr, "\phi^{o}"] & & M^{o}(B) \\
& & \\
A & & \\
\end{tikzcd}\]
The unsharp classical model $M^{o}(B(S,\Sigma))$ can in turn be explained in terms of the sharp classical model 
$\Bor({M^{o}(B(S,\Sigma)))}$. 
 
\tempout
{Suppose now that $p$ is an ontological representation of a PM $A$ with $(S,\Sigma)$ its set of ontic states. The outcomes of the semiclassical cover of $A$ are pairs $(x,E)$ with $x \in E \in \mc{A}$. By part (b) of Definition \ref{def: ontological representation}, for each such pair $(x,E)$ we have a measurable 
function $f_{x,E}:S\to\R$ given by $f_{x,E}(s)=p(x\,|\,E,s)$.} 
In fact, the mapping $(x,E) \mapsto f_{x,E}$ is an embedding. 
By (c), if $f_{x,E} = f_{y,F}$, then 
$\tilde{\alpha}(x,E) = \tilde{(y,F)}$ for all
$\tilde{\alpha} \in \Omega(\tilde{A})$, i.e., 
$\alpha(x) = \alpha(y)$ for all $x, y \in X_A$. We are (or should be) assuming a separating set of states here, so $x = y$. Condition (a) tells us that this is test-preserving, so we have an embedding of test spaces. Finally, for any probability measure $\mu$ on $(S,\Sigma)$, 
\[\phi^{\ast}(\mu)(x,E) = \hat{\mu}(f_{x,E}) = 
\int_{S} f_{x,E}(s) d\mu(s),\]
so condition (c) tells us that ${\phi^{o}}^{\ast}$ is surjective on states. 

Since $A$ is a quotient of its semiclassical cover, this yields a unsharp classical explanation of $A$. Conversely, any unsharp classical explanation of $A$ obtained by embedding its semiclassical cover $\tilde{A}$ into an unsharp classical model, can be seen to yield an ontological representation. 

In summary: a sharp, respectively unsharp, contextual ontological representation of a probabilistic model $A$ is simply an embedding $\phi : \tilde{A} \rightarrow C$ where $C$ is an sharp, respectively unsharp, Borel model. Thus, we can paraphrase Theorem \ref{thm: Borelification} as saying that every probabilistic model arises as the quotient of one admitting a sharp ontological model. 

\section{Composition and Locality} 

By Theorem~ \ref{thm: Borelification},  every locally finite PM admits a classical explanation, hence, by the discussion in Section 3.3, 
 it admits a sharp ontological representation. The situation for composite systems is more delicate: while such a composite, if locally finite, admits an ontological representation, this will generally not satisfy a reasonable locality criterion with respect to the component models.  We return to this below.

\subsection{Composites and entanglement}  We begin with an overview of the theory of non-signalling composites of probabilistic models.  For a more leisurely account, see \cite{Tesp}. 

Let $A$ and $B$ be test spaces, with outcome-spaces $X(A)$ and $X(B)$. We abuse notation slightly to write $\M(A) \times \M(B)$  for the set of all {\em product tests} $E \times F$, with $E \in \M(A)$ and $F \in \M(B)$. We understand these to represent experiments in which $E$ and $F$ are performed on separate systems, and their outcomes, collated. Note that the outcome-space  of $\M(A) \times \M(B)$ is $X(A) \times X(B)$. A {\em joint probability weight} $\omega$ on $\M(A)$ and $\M(B)$ is simply a probability weight on $\M(A) \times \M(B)$. We 
say that $\omega$ is {\em non-signalling} iff the {\em marginal weights} 
\[\omega_{1}(x) = \sum_{y \in F} \omega(x,y) \ \ \mbox{and} \ \ 
\omega_2(y) = \sum_{x \in E} \omega(x,y)\]
are independent of the choice of $E \in \M(A)$ and $F \in \M(B)$. In this case, we define the {\em conditional states} $\omega_{1|y} \in \Pr(\M(A))$ and $\omega_{2|x} \in \Pr(\M(B))$ by 
\[\omega_{1|y}(x) = \omega(x,y)/\omega_{2}(y) \ \ \mbox{and} \ \ 
\omega_{2|x}(y) = \omega(x,y)/\omega_{1}(x),\]
respectively. If $\omega_{1|y} \in \Omega(A)$ and $\omega_{2|x} \in \Omega(A)$ for every $x \in X(A)$ and $y \in X(B)$, then we say that $\omega$ is a non-signalling joint {\em state} of $A$ and $B$.  We write $\Omega_{NS}(A,B)$ for the set of all non-signalling joint states of $A$ and $B$. The {\em minimal non-signalling composite}\footnote{For {\em cognoscenti}: the state space of $A \times B$ is the {\em maximal} tensor product of the state spaces of $A$ and $B$; thus, the 
effect cone of $A \times_{NS} B$ is the minimal 
tensor product of those of $A$ and $B$} of $A$ and $B$ is then 
\[A \times_{NS} B \ := \ (\M(A) \times \M(B), \Omega_{NS}(A,B)).\]

The simplest example of a non-signalling probability weight 
on $A$ and $B$ is a {\em product weight}, $\alpha \otimes \beta$, given by 
\[(\alpha \otimes \beta)(x,y) = \alpha(x)\beta(y)\]
for all $x \in X(A)$ and $y \in X(B)$, where $\alpha$ and $\beta$ are probability weights on $A$ and $B$, respectively. If $\alpha$ and $\beta$ are states, then of course we speak of $\alpha \otimes \beta$ as a product state on $A$ and $B$. 


\begin{definition} {\em A {\em separable joint state} of $A$ and $B$ is a joint state lying in the closed convex hull of the product states on $A$ and $B$. A joint state that is not separable is an {\em entangled joint state}.}
\end{definition} 


Note that the distinction between separability and entanglement 
of {\em states} depends on the models, $A$ and $B$, and, in particular, on their state spaces as well as their test spaces.  If we replace $\Omega(A)$ and $\Omega(B)$ by the full sets $\Pr(A)$ and $\Pr(B)$ of all probability weights, we will speak of separable and entangled {\em weights} on $A$ and $B$.  Thus, a joint state can be separable as a joint probability weight --- that is, separable with respect to $\Pr(A)$ and $\Pr(\B)$ --- while being entangled as a joint state.  This is even possible for classical models: 

\begin{example}[\bf Classical entanglement]\label{ex: classical entanglement} {\em Let $\B$ be the test space of finite partitions of $[0,1]$ by Borel sets. Outcomes of $\B$ are non-empty Borel sets, and probability weights on $\B$ are simply finitely-additive Borel probability measures on $[0,1]$, as restricted to non-empty Borel sets. We will ignore this tiny distinction, and simply identify $\Pr(\B)$ with the simplex $\Delta$ of all finitely additive Borel measures on $[0,1]$.  Now let $\lambda$ denote Lebesgue measure on $[0,1]$, and let $\Delta_{\lambda}$ be the collection of all Borel measures on $[0,1]$ absolutely continuous with respect to $\lambda$. This is a closed face of $\Delta$, and thus, itself also a simplex. 
Now consider the probabilistic model $B = (\B,\Delta_{\lambda})$.  Every separable probability weight on $B \times_{NS} B$ --- that is, every convex combination of product states, and every pointwise limit of such states in $[0,1]^{\Sigma \times \Sigma}$ --- is absolutely continuous with respect to Lebesgue measure $\lambda \otimes \lambda$ on $[0,1]^{2}$. Let $\omega$ be the measure on $[0,1]^2$ given by $\omega(a,b) = \lambda(a \cap b)$. This is not 
absolutely continuous with respect to $\lambda \otimes \lambda$, but is non-signalling, with conditional weights that do belong to $\Delta_{\lambda}$.  Hence, $\omega$ is a non-separable, non-signalling state. On the other hand, every finitely-additive measure on $[0,1]^2$ has the form $\mu = \int_{s} \delta_{s} \otimes \delta_{t} d\mu(s,t)$ 
and is therefore separable with respect to the set $\Pr(\M_{A})$ 
of all probability weights. 
}\end{example} 

{\bf General non-signalling composites} The minimal non-signalling composite $A \times_{NS} B$ allows only product tests.  To allow for a greater abundance of joint tests, it's natural to introduce the following 

\tempout{
{\blue \begin{definition}[Alt] A {\em composite} of test spaces 
$\A$ and $\B$ is a test space $\Cat$, together with a positive, 
injective interpretation $\pi : \A \times \B \rightarrow \Cat$.  We 
say that a probability weight $\gamma$ on $\Cat$ is {\em non-signaling} iff $\pi^{\ast}(\gamma)$ is non-signaling, {\em entangled} iff 
$\pi^{\ast}(\gamma)$ is entangled, and {\em separable} iff 
$\pi^{\ast}(\gamma)$ is separable.  If $\A = \M(A)$ and $\B = \M(B)$ 
for models $A$ and $B$, we say that a model $C = (\C,\Gamma)$ is 
a {\em non-signaling composite} of $A$ and $B$ iff 
every state in $\Gamma$ belongs to $\Omega_{NS}(A,B)$. 
\end{definition} }
}

\begin{definition} {\em A {\em non-signalling composite} of $A$ and $B$ is a probabilistic model $AB$, together with a positive, injective, interpretation of models $\pi : A \times_{NS} B \rightarrow AB$. }
\end{definition} 

Thus, every pair of outcomes $x \in X(A)$ and $y \in X(B)$ correspond to an event $\pi(x,y) \in \Ev(AB)$, and every state in $\Omega(AB)$ pulls back to a non-signalling joint state $\pi^{\ast}(\omega) \in \Omega_{NS}(A,B)$, given by 
\[\pi^{\ast}(\omega)(x,y) = \omega(\pi(x,y)).\]
In a slight abuse of terminology, we will refer to $\pi^{\ast}(\omega)(x,y)$ as the {\em restriction} of $\omega$ to $A \times_{NS} B$. 

When $\pi^{\ast}$ is injective, so that every state $\omega \in \Omega(AB)$ is determined by its restriction  
joint state $\pi^{\ast}(\omega)$ to $A \times_{NS} B$, we say that $(AB,\pi)$ is {\em locally tomographic}. When every product state $\alpha \otimes \beta$ arises as the restriction $\pi^{\ast}(\omega)$ of some state $\omega \in \Omega(AB)$, we say that $(AB,\pi)$ is {\em strong}. 

In more physical terms, a composite is locally tomographic iff local measurements suffice to determine the global state, and strong, if all pairs of local states can be independently prepared (possibly in multiple ways) as the marginals of a suitable state of the global composite system. Where either of these properties is lacking, the definition of entanglement requires some care. 

\begin{definition}\label{def: separable joint states} Let $(AB,\pi)$ be a non-signaling composite 
of $A$ and $B$. A joint state of the form $\pi^{\ast}(\omega)$ 
for some $\omega \in \Omega(AB)$ is {\em preparable} in $AB$. 
A state $\omega \in \Omega(AB)$ is {\em separable} iff $\pi^{\ast}(\omega)$ lies in the closed convex hull of the preparable product states. 
\end{definition} 

If $(AB,\pi)$ is locally tomographic, then $AB$ is strong iff 
$\Omega(AB)$ contains all product states.  If $(AB,\pi)$ is strong, then $\omega \in \Omega(AB)$ is separable iff $\pi^{\ast}(\omega)$ is a separable joint state in the sense of Definition \ref{def: separable joint states}. But non-locally tomographic and non-strong 
composites do arise, even in quantum theory, as we now illustrate. 

\begin{example}[\bf Quantum Composites] \label{ex: quantum composites}
{\em Let $A$ and $B$ be the full projective quantum model corresponding to Hilbert spaces $\H_{A}$ and $\H_B$ --- both real, or both complex --- as in Example \ref{ex: projective QM}. Thus, $\M(A) = \M(\H_{A})$ consists of all maximal pairwise orthogonal projections on $\H_A$, and $\Omega(A) = \Omega(\H_A)$ is the set of probability weights $\alpha_{W}$ corresponding to density operators $W$ on $\H_A$, according to $\alpha_{W}(p) = \Tr(Wp)$, and similarly for $B$. The mapping $\pi : \M(\H_A) \times \M(\H_B) \rightarrow \M(\H_A \otimes \H_B)$ given by $\pi(p,q) = p \otimes q$ is easily seen to be a test-space embedding. 
To obtain a non-signalling composite $AB$, we must 
equip $\M(AB) = \M(\H_A \otimes \H_B)$ with a suitable state-space. 
There are multiple options. For instance, 
\vspace{-.1in}
\begin{itemize}
\item[(a)] One might take $\Omega(AB)$ to be the 
full set $\Omega(\H_{A} \otimes \H_{B})$ of density operators 
on $\H_{A} \otimes \H_{B}$,  as in standard QM.  
\item[(b)] One could take $\Omega(AB)$ to be the set of 
separable bipartite states. 
\item[(c)] If $\H_{A} = \H_{B}$, 
one could take $\Omega(AB)$ to consist of Bosonic or Fermionic mixed states --- that is, density operators $W$ on 
$\H_{A} \otimes \H_{B}$ satisfying $\sigma W \sigma = W$, or those 
satisfying $\sigma W \sigma = -W$, where $\sigma$ is the swap operator $\sigma : x \otimes y \rightarrow x \otimes y$.
\end{itemize} 
As is well-known, the composite in (a) is locally tomographic 
if $\H_{A}$ and $\H_{B}$ are complex, but not if they are real. 
In both (a) and (b), the composites are strong, but in (b), there 
are, by fiat, no entangled states. Neither of the composites in (c) 
are strong. In the Fermionic case, no product state is preparable, 
so all states are entangled; in the Bosonic case, products of 
identical states are preparable, and mixtures of these are separable. 
}\end{example}

\subsection{Bell-locality} 

We now come to the crucial idea of Bell-locality. The following is a direct paraphrase of the standard definition in terms of ontological models (e.g., \cite{Leifer}) into our language:

\begin{definition}\label{def: Bell-local} {\em  Let $\pi : A \times_{NS} B \rightarrow AB$ be a non-signalling composite and $\phi : \tilde{AB} \rightarrow B(S,\Sigma)$ be a test-space embedding into the finitely-additive Borel model associated with a measurable space $(S,\Sigma)$. We say that $\phi$ is {\em Bell-local} with respect to $\pi$ iff for each point mass $\delta_s$, $s \in S$, the probability weight $\gamma_{s}$ with $\tilde{\gamma_{s}} := \phi^{\ast}(\delta_{s})$ on $AB$ restricts to a product weight, that is, there are probability weights $\alpha_s$ on $\M(A)$ and $\beta_{s}$ on $\M(B)$ with $\pi^{\ast}(\gamma_{s}) = \alpha_{s} \otimes \beta_{s}$. }
\end{definition}

It is important to notice here that $\gamma_s$, $\alpha_s$, and $\beta_{s}$ need not be {\em states} of $AB$, $A$, or $B$. Thus, locality, as defined above, is entirely a property of the test-spaces $\M(A)$, $\M(B)$, and $\M(AB)$, and the test-space embeddings $\pi$ and $\phi$: the chosen state spaces $\Omega(A), \Omega(B)$ and $\Omega(AB)$ do not enter into it. Or, equivalently, the definition refers only to 
full models, in which $\Omega(A) = \Pr(\M(A))$, $\Omega(B) = \Pr(\M(B))$, 
and $\Omega(AB) = \Pr(\M(AB))$. 

At the same time, even if all probability {\em weights on} $\M(AB)$ 
are separable, it is possible for $\Omega_{AB}$ to contain states that are {\em not} convex combinations of product {\em states} in $\Omega_A$ and $\Omega_B$.  These would then be entangled as states, though not as probability weights, as in Example \ref{ex: classical entanglement}.

Returning to Definition \ref{def: Bell-local}, suppose that the models $A$, $B$ and $AB$ therein are locally finite, so that their state spaces are compact. For every probability weight $\mu$ on $S$, we can write 
\[\mu = \int_{S} \delta_{s} ~d\mu(s)\]
as in Equation \ref{eq: measures own barycenters}. Since $C$ is locally finite, $\pi^{\ast} : \Pr(C) \rightarrow \Pr(AB)$ is continuous (see remarks above in \ref{para: PM morphisms}),  and thus, Borel measurable. Since by Bell locality 
$\pi^{\ast} \circ \phi^{\ast}(\delta_{s}) = \alpha_s \otimes \beta_s$, we have 
\[\pi^{\ast} \phi^{\ast}(\mu)  = \int_{S} \pi^{\ast} \phi^{\ast}(\delta_{s})\ d\mu(s) = \int_{S} \alpha_{s} \otimes \beta_{s} \ d\mu(S)\]
which
belongs to the closed convex hull of the product weights, i.e., to the set of separable probability weights, on $A \times B$.  Since $\phi \circ \pi$ is an embedding, it follows that all states of $AB$ are separable {\em as probability weights}; that is, they are all (integral) convex combinations of weights restricting to products of probability weights. In summary, we have the following

\begin{proposition}\label{prop: local implies separable} If a non-signalling composite $\pi : A \times_{NS} B \rightarrow AB$ of locally finite probabilistic models admits a Bell-local classical embedding, then the restrictions of all states of $AB$ to 
$A \times_{NS} B$ are separable as joint probability weights. 
\end{proposition}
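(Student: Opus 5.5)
We follow the integral-decomposition argument sketched just before the statement, organized as follows. Fix a Bell-local classical embedding $\phi : \tilde{AB} \rightarrow B(S,\Sigma)$, with state space $\Omega'$ on the Borel side, and let $\omega \in \Omega(AB)$. Since $\phi$ is a PM-embedding, $\phi^{\ast}$ maps $\Omega'$ onto $\Omega(\tilde{AB})$. We therefore choose a finitely-additive probability measure $\mu$ on $(S,\Sigma)$ with $\phi^{\ast}(\mu) = \tilde{\omega}$. Pulling back along the quotient $q : \tilde{AB} \rightarrow AB$ is injective, so $\omega$ is determined by $\tilde{\omega}$. It then suffices to show that $\pi^{\ast}(\omega)$, viewed as a weight on $\M(A)\times\M(B)$, lies in the closed convex hull of products $\alpha\otimes\beta$ of probability weights.

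Next we evaluate $\pi^{\ast}(\omega)$ pointwise. For $(x,y)\in X(A)\times X(B)$ we pick a test $G$ of $AB$ containing the event $\pi(x,y)$, which exists because $\pi$ is an interpretation. Then
\[\pi^{\ast}(\omega)(x,y) = \sum_{z\in\pi(x,y)} \mu(\phi(z,G)) = \int_{S} \sum_{z\in \pi(x,y)} \delta_s(\phi(z,G))\, d\mu(s) = \int_S \pi^{\ast}(\gamma_s)(x,y)\, d\mu(s),\]
and Bell-locality rewrites the integrand as $\alpha_s(x)\beta_s(y)$. Thus $\pi^{\ast}(\omega)$ is the weak-$\ast$ integral $\int_S \alpha_s\otimes\beta_s\, d\mu(s)$ in $[0,1]^{X(A)\times X(B)}$, as in Equation \ref{eq: measures own barycenters}. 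Local finiteness makes each coordinate function bounded and measurable, so the integral is defined coordinatewise.

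The main obstacle is the final step: showing that such an integral against a merely finitely-additive $\mu$ lies in the pointwise-closed convex hull $K$ of product weights. Local finiteness makes $\Pr(\M(A)\times\M(B))$ compact, and $K$ is a closed convex subset of it. Suppose $\pi^{\ast}(\omega)\notin K$. By Hahn--Banach in the product topology, there is a continuous linear functional separating it from $K$. Such a functional depends on finitely many coordinates, say $f(\nu)=\sum_{i=1}^n c_i\nu(x_i,y_i)$, with $f(\pi^{\ast}(\omega)) > \sup_K f$. But
\[f(\pi^{\ast}(\omega)) = \int_S \sum_i c_i \alpha_s(x_i)\beta_s(y_i)\,d\mu(s) \le \sup_s f(\alpha_s\otimes\beta_s) \le \sup_K f,\]
since integrating a bounded measurable function against a finitely-additive probability measure never exceeds its supremum. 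This is a contradiction, so $\pi^{\ast}(\omega)\in K$. The approach avoids needing $s\mapsto\alpha_s\otimes\beta_s$ to be Borel into a compact set, and needs no countable additivity. The only measurability used is that each $s\mapsto \delta_s(\phi(z,G))$ is an indicator of a set in $\Sigma$, which holds by construction.
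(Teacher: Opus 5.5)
Your proof is correct and follows the paper's argument. As in the paper, you write the representing measure $\mu$ as an average of point masses, push it through $\pi^{\ast}\phi^{\ast}$, and use Bell-locality to identify $\pi^{\ast}(\omega)$ with $\int_S \alpha_s\otimes\beta_s\,d\mu(s)$. The paper merely asserts that this integral lies in the closed convex hull of the product weights, citing continuity of $\pi^{\ast}$. Your coordinatewise computation and Hahn--Banach separation step actually justify that claim when $\mu$ is only finitely additive, a case the paper's barycenter machinery (stated for Baire measures) does not directly cover.
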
 

For a given classical embedding $\phi$ to be local, therefore, all states on $AB$ must be separable as probability weights. If $AB$ supports even one state that is not separable in this sense, then $AB$ admits no local classical embedding. 

Since $\tilde{\pi} : \tilde{A} \times_{NS} \tilde{B} \rightarrow \tilde{AB}$ is a non-signalling composite, Proposition \ref{prop: local implies separable} applies to the canonical classical embedding of $\tilde{AB}$ into the Borel test space over the set $S$ of dispersion-free probability weights on $\tilde{AB}$. 
It is not difficult to see that a dispersion-free probability weight on $\M(A) \times \M(B)$ is non-signalling iff it is a product weight. Thus unless $\M(A)$ and $\M(B)$ consist of single tests, the vast majority of the d.f. probability weights $\phi^{\ast}(\delta_{s})$ arising from points $s \in S$ are signalling.  The classical embedding, then,  represents non-signalling states of $A \times_{NS} B$ as weighted averages of dispersion-free, {\em signalling}, probability weights. As a simple but instructive illustration of this, consider $A = B = \{\{x,y\}, \{u,v\}\}$. The d.f. joint probability weight defined by 
\[\omega(x,x) = \omega(x,u) = \omega(u,x) = \omega(u,v) = 1\] 
is signalling (since, e.g., $\omega_{2|E}(u) = 1$ while $\omega_{2|F}(u) = 0$). Similarly, 
\[\omega'(y,y) = \omega(y,v) = \omega(v,y) = \omega(v,u) = 1\] 
is signalling. But $\frac{1}{2}(\omega + \omega')$ is the well-known entangled PR-box state \cite{PR}. ,

\begin{remark}{\em  It is not difficult to extend Definition \ref{def: Bell-local} to apply to arbitrary classical explanations $(C,q,e)$ of a composite $AB$. One can also amend it to require that $\pi^{\ast}(\delta_{s}) = \gamma_{s}$ have the form $q^{\ast}(\alpha_{s} \otimes \beta_{s})$ for {\em states} $\alpha_s$ and $\beta_s$ in $\Omega(A)$ and $\Omega(B)$, respectively (in which case, of course, $\gamma_{s} \in \Omega(C)$). In that case, we might say that the explanation is {\em state-local}. The same argument as give above then shows that if $AB$ has a state-local classical explanation, its states are all separable {\em as states}. }
\end{remark}

\section{Conclusion}  Bell's Theorem and subsequent tests of the 
BCHSH inequality establish that one can prepare joint states 
of quantum systems that do not have local classical explanations, 
even in our broad sense. In some quarters, the significance of 
this is a matter of contention. According to one 
point of view, Bell's Theorem sets up a tension between locality and (some version of) ``classicality".  According 
to another, the theorem simply establishes that the world 
is non-local, full stop. See \cite{Maudlin, Werner} for 
a particularly sharp exchange of views on this matter. 
However, there nothing sacred about our definition of classicality, or that of locality. What has been shown above is that {\em one} notion of locality is in conflict with {\em one} notion of classicality. The following attitudes all seem to us consistent with this, and 
reasonable on their own terms (though we lean towards the first one).  

(1) The non-product dispersion-free probability weights on $\tilde{AB}$ are signalling, and hence, unphysical.  The world is non-classical, since non-commuting quantum observables are {\em in fact} not jointly testable. Whether it is local or not depends on whether or not one understands entanglement {\em per se} to be a form of non-locality. If 
so, the world is non-local {\em because} it is non-classical. 


(2) The dispersion-free probability weights on $\tilde{AB}$ provide a coherent ontology of the world; the fact that the states we observe are all non-signalling reflects a degree of fine-tuning that is simply part of the world's architecture. The world is classical, but non-local. 

(3)  Locality is part of what we {\em mean} by classicality. The world is non-classical {\em because} it is non-local. 



\begin{thebibliography}{99}

\bibitem{LTS} H. Barnum, M. Graydon, and A. Wilce, 
Composites and categories of euclidean Jordan algebras, 
Quantum {\bf 4} (2020) 

\bibitem{LTS} H. Barnum, M. Graydon, and A. Wilce, 
Locally tomographic shadows, in S. Mansfield, B. Valiron, V. Zamdzhiev (eds.): Quantum Physics and Logic 2023 (QPL 2023)
EPTCS {\bf 384} (2023),

\bibitem{BelBug} E. Beltrametti and S. Bugajski, A classical extension of quantum mechanics, J. Phys. A {\bf 28} (1995)

\bibitem{Bunce-Wright} J. L. Bunce and J. D. M. Wright, 
The Mackey-Gleason problem, Bull. Amer. Math. Soc. {\bf 26} (1992) 

\bibitem{DS} N. Dunford and J. Schwarz, Linear Operators, vol. I., Wiley, 1988


\bibitem{FR-EL} D. Foulis and C. Randall, The empirical logic approach to the physicl sciences, in A. H\"{a}rtkamper and 
H. Neumann (eds.), {\em Foundations of Quantum Mechanics in Ordered Linear Spaces}, Lecture Notes in Physics {\bf 29}, 
Springer, 1974



\bibitem{FR-MMQM} D. Foulis and C. Randall, Manuals, morphisms and quantum mechanics, in A. R. Marlowe (ed.), {\em Mathematical Foundations of Quantum Theory}, AP 1978

\bibitem{RF-Operational} D. Foulis and C. Randall, 
The operational approach to quantum mechanics, in C. A Hooker (Ed.), {\em Physical Theory as Logico-0perational Structure}, 
University of Western Ontario Series in Philosophy of Science, Reidel, 1979
Reidel, 

\bibitem{FR-Dirac} D. Foulis and C. Randall, Dirac Revisited, in 
P. Lahti and P. Mittelstaedt, eds., {\em Symposium on the foundations of modern physics. 50 years of the Einstein-Podolsky-Rosen Gedankenexperiment}, World Scientific, 1985.

\bibitem{FGR} D. Foulis, R. Greechie, and G. R\"{u}ttimann, Logico-algebraic structures II: supports in test spaces, Int. J. Theor. Physics {\bf 32} (1993)
 
\bibitem{Gheorgiu-Heunen} A. Gheorgiu and C. Heunen,  Ontological models for quantum theory as functors, in B. Coecke and M. Leifer (eds.), {\em Quantum Physics and Logic 2019 (QPL)} EPTCS {\bf 318}, 2020, pp. 196–21  {\tt arXiv: 1905.09055} 

\bibitem{Holevo} A. Holevo, {\em Probabilistic and Statistical Aspects of Quantum Theory}, North-Holland 1982 (reissued by Edizoni della Normale, 2011)

\bibitem{Unsharp} J. Harding and A. Wilce, Probabilistic theories and order-unit spaces, to appear

\bibitem{Johnson-Yau} N. Johnson and D. Yau, 
{\em $2$-Dimensional Categories}, 
{\tt https://arxiv.org/pdf/2002.06055} (2020)


\bibitem{Leifer} M. Leifer, Is the quantum state real? An extended review of $\psi$-ontology theorems, Quanta {\bf 3} 2014;	

\bibitem{Maudlin} T. Maudlin, What Bell did, J. Phys. A {\bf 47}, 2014 {\tt arXiv:1408.1826} 

\bibitem{PR} S. Popescu and D. Rohrlich, Quantum nonlocality as an axiom, Foundations of Physics {\bf 24} (1994)


\bibitem{Simon} B. Simon, {\em Convexity: an analytic viewpoint},  Cambridge, 2011

\bibitem{Spekkens} R. Spekkens, Contextuality for preparations, transformations, and unsharp measurements, Phys. Rev. A {\bf 71}  (2005)

\bibitem{Werner} R. Werner, Comment on Maudlin's Paper ``What Bell did", J. Phys. A {\bf 47}, 2014; see also R. Werner, 
What Maudlin replied to, 
{\tt https://arxiv.org/abs/1411.2120} (2014)

\bibitem{SEP} A. Wilce, Quantum logic and probability theory, Stanford Encyclopedia of Philosophy 2002

\bibitem{Tesp} A. Wilce, Test spaces, in D. Gabbay, K. Engesser and D. Lehman (eds.) {\em Handbook of Quantum Logic: Quantum Structures}, North-Holland, 2006 [check]

\bibitem{Shortcut} A. Wilce, A shortcut from categorical quantum theory to convex operational theories, in Bob Coecke and Aleks Kissinger (eds.): 14th International Conference on Quantum Physics and Logic (QPL) EPTCS {\bf 266}, 2018, pp. 222–236


\bibitem{CCAM} Coarse-graining and compounding as monads, {\tt 
arXiv:2410.08818} (2024)

\bibitem{Short-Course} A. Wilce, {\em Generalized Probability Theory: notes for a short course} {\tt arXiv:2501.00718} (2025)

\bibitem{Wright} R. Wright, The structure of projection-valued states: A generalization of Wigner's theorem, 
Int. J. Theor. Physics {\bf 16W}, 1977


\end{thebibliography}
\end{document}